\newtheorem{defn}{Definition}
\newcommand{\bs}{\mathbf{s}}
\newcommand{\cS}{{\mathcal S}}
\newcommand{\cT}{{\mathcal T}}
\newcommand{\Su}{\mathcal{S}^{\cup}}
\newcommand{\muu}{\mu^{\cup}}
\newcommand{\dif}{\mathrm{d}}
\definecolor{mypine}{rgb}{0.05,0.45,0.05}
\definecolor{mydarkred}{rgb}{0.75,0.05,0.05}
\definecolor{mypurple}{rgb}{0.75,0.0,0.75}
\definecolor{mygray}{rgb}{0.35,0.35,0.35}
\definecolor{mylightgray}{rgb}{0.5,0.5,0.5}
\def\tF{\tilde{F}}
\def\tG{\tilde{G}}
\def\hlambda{\hat{\lambda}}
\def\matern{Mat\'{e}rn }
\newcommand{\given}{\,|\,}
\newtheorem{thrm}[defn]{Theorem} 
\newtheorem{coro}[defn]{Corollary}
\title{Bayesian inference for \matern repulsive processes}
\author{Vinayak Rao\thanks{Corresponding author}}
\address{Department of Statistics, Purdue University, USA}
\author{Ryan P. Adams}
\address{School of Engineering and Applied Sciences, Harvard University, USA}
\author{David Dunson}
\address{Department of Statistical Science, Duke University, USA}
\date{}
\begin{document}

\maketitle
\begin{abstract}
In many applications involving point pattern data, the Poisson process assumption is unrealistic, with the data exhibiting a
more regular spread.
Such repulsion between events is exhibited by trees for example, because of competition for light and nutrients. Other examples include the locations of 
biological cells and cities, and the times of neuronal spikes. Given the many 
applications of repulsive point processes, there is a surprisingly limited literature developing flexible, realistic and interpretable models, as well as 
efficient inferential 
methods.  We address this gap by developing a modeling framework around the Mat\'ern type-III repulsive process. 
We consider a number of extensions of the original \matern type-III process for both the homogeneous and inhomogeneous cases.
We also derive the probability density of this generalized \matern process, allowing us to characterize the conditional distribution of the various 
latent variables, and leading to a novel and efficient Markov chain Monte Carlo algorithm.
We apply our ideas to datasets involving the spatial locations of trees, nerve
fiber cells and Greyhound bus stations.
\keywords{Event process; Gaussian process; Gibbs sampling; Mat\'ern process; Point pattern data; Poisson process; Repulsive process; Spatial data}
\end{abstract}

\section{Introduction}

Point processes find wide use in fields such as 
astronomy \citep{Peebles74}, biology \citep{WallSar11}, ecology \citep{Hill1973}, epidemiology \citep{Knox04}, geography \citep{kendall39} and 
neuroscience \citep{Brown2004a}. The simplest and most popular model for point pattern data is the Poisson process (see for example \citet{DalVer2008a}); however, implicit in the 
Poisson process is the assumption of independence of the event locations. This simplification is unsuitable for many
real applications in which it is of interest to account for interactions between nearby events. 

Point processes on the real line can deviate from Poisson either by being more bursty or more refractory. In higher dimensions, 
these are called {clustered} and {repulsive} processes.  Our focus is on the latter,
characterized by being more regular (underdispersed) than the Poisson process. Reasons for this could be competition for finite resources 
between trees \citep{Strand72}, interaction between rigid objects such as cells \citep{WallSar11} or 
the result of planning (for example, the locations of bus stations).
Characterizing this repulsion is important towards understanding how disease affects cell locations~\citep{WallSar11}, how neural spiking
history affects stimulus response~\citep{Brown2004a} or how reliable a network of stations is~\citep{Sahin2007}.


Developing a flexible and tractable statistical framework to study such 
repulsion is not straightforward on spaces more complicated than the real line.  
For the latter, the ordering of points leads to a convenient framework based on renewal processes \citep{DalVer2008a}.
This work focuses on a class of repulsive point processes on higher-dimensional spaces called the \matern type-III process. First introduced in \citet{Matern60, Matern86}, 
this involves thinning events of a `primary' Poisson process that are `too close to each other'. 
Starting with the simplest such process (called a hardcore process), we introduce various extensions that provide a flexible framework for modeling
repulsive processes. We derive the probability density of the resulting process, a characterization that allows us to identify the conditional 
distribution over the thinned events as a simple Poisson process. This allows us to develop a simple and efficient Markov chain Monte Carlo algorithm for 
posterior inference. 

\section{Related work}

{Gibbs processes} \citep{DalVer2008a} arose from the statistical physics literature to describe 
systems of interacting particles. A Gibbs process assigns a potential energy $U(S)$ to any configuration of events $S = (s_1, \cdots,s_n )$: 
\vspace{-.1in}
\begin{align}
  U(S) &= \sum_{i=1}^n \sum_{1 \le j_1 < \cdots < j_i \le n} \psi_i(s_{j_1},\cdots,s_{j_i}),
\end{align}
where $\psi_i$ is an $i$th-order potential term.
Usually, interactions are limited to be pairwise, 
and by choosing these potentials appropriately, one can model different kinds of interactions. 
The probability density of any configuration is proportional to its exponentiated
negative energy, and letting $\theta$ parametrize the potential energy, we have
\begin{align}
  p(S\given\theta) &= \frac{\exp(-U(S;\theta))}{Z(\theta)}.
\end{align}
Unfortunately, evaluating the normalization constant $Z(\theta)$ is usually intractable, making
even sampling from the prior difficult (typically, this requires a coupling from the past approach \citep{Moller2007}). 
Inference over the parameters usually proceeds by maximum likelihood or pseudolikelihood methods \citep{Moller2007, Mateu2001}, and is slow and expensive.

Determinantal point processes \citep{hough_dpp,Scardicchio09}) are another framework for modeling repulsion. While
these processes are mathematically and computationally elegant, they are not intuitive or easy to specify, and with few exceptions (e.g., \citet{affandi2014learning}) there is little Bayesian work involving such models. 

There has also been work using \matern type-I and II processes (see \cite{WallSar11}). We mention some limitations of these next. Additionally,
our sampling scheme outlined later does not extend to these processes, making Bayesian inference via MCMC less natural than with the
type-III process.

\section{\matern repulsive point processes}

\cite{Matern60} proposed three schemes, now called the \matern type-I, type-II, and type-III hardcore point processes,
for constructing repulsive point processes.
For the type-I process, one samples a \emph{primary} process from a homogeneous Poisson process 
with intensity $\lambda$, and then deletes all points separated by less than $R$. While the simplicity of this scheme makes it amenable to theoretical 
analysis, the thinning strategy is often too aggressive. In particular, 
as $\lambda$ increases the thinning effect begins to dominate, and eventually the density of \matern events begins to {decrease} with $\lambda$.

The \matern type-II process tries to rectify this by assigning each point an `age'.
When there is a conflict between two points, 
rather than deleting {both}, the older point wins. While this construction allows higher event densities, it also allows 
an event to be thinned by an earlier point that was also thinned. This is slightly unnatural, as one might expect only surviving points to influence 
future events. 

The \matern type-III process 
addresses these limitations by letting a newer event be thinned only if 
it falls within radius $R$ of 
an older event that was not thinned before. 
The resulting process has a number of desirable properties. In many applications, this thinning mechanism is 
more natural than the type-I and II processes. 
In particular, it forms a realistic model for various spatio-temporal phenomena where the latent birth times are not observed, and
must be inferred.  This process also supports higher event densities than the type-I and 
type-II processes with the same parameters; in fact, as $\lambda$ increases,
the average number of points in any area increases monotonically to 
the `jamming limit' (the maximum density at which spheres of radius $R$ can be packed in a bounded area \citep{moller10}). 
The monotonicity property with respect to the intensity of the primary process is important in applications where we model inhomogeneity by
allowing $\lambda$ to vary with location (see section~\ref{sec:inhom_mat}), with large values of $\lambda$ implying high densities.

In spite of these properties, the \matern type-III process has not found widespread use in the spatial point process community.
Theoretically it is not as well understood as the other two \matern processes; for instance, there is no closed form 
expression for the average number of points in any region. Instead, one usually resorts to simulation studies to better understand the modeling assumptions
implicit to this process.

A more severe impediment to the use of this process (and the other \matern processes) is that given a realization,
there do not exist efficient techniques for inference over parameters 
such as $\lambda$ or the radius of interaction $R$. The few existing inference schemes involve imputing, and then perturbing the 
thinned events via 
incremental birth-death steps. This sets up a Markov chain which proceeds by randomly inserting, deleting or shifting the thinned events, with
the various event probabilities set up so that the Markov chain converges to the correct posterior over thinned events \citep{moller10, Hube:Wolp:2009, adamsthesis}. 
Given the entire set of thinned events, it is straightforward to obtain samples of
the parameters~$\lambda$ and~$R$. However, the incremental nature of these 
birth-death updates can make the sampler mix quite slowly. The birth-death sampler can be adapted to a coupling from
the past scheme to draw perfect samples of the thinned events \citep{Hube:Wolp:2009}. This can then be used to approximate the likelihood of the \matern observations, or perhaps, to 
drive a Markov chain following ideas from \cite{AndRob10}. However, this too can be quite inefficient, 
with long waiting times until the sampler returns a perfect sample.

Somewhat surprisingly, despite being more complicated than the type-I and II processes, 
we can develop an efficient MCMC sampler for the type-III process.
Before describing this, we develop more general extensions of the \matern type-III process, providing a flexible and practical 
framework for the modeling of repulsive processes.

\section{Generalized \matern type-III processes} \label{sec:gen_mat}

   A \matern type-III hardcore point process on a measurable space~$(\cS,\Sigma)$
 is a repulsive point process parametrized by an intensity~$\lambda$ and an 
interaction radius~$R$. It is obtained by thinning events
of a homogeneous {primary} Poisson process~$F$ with intensity~$\lambda$. 
Each event~${f \in F}$ of the primary process is independently assigned a random mark~$t$, the time of its
birth. Without loss of generality, we assume this takes values in the interval~$[0,1]$, which
we call~$\cT$. 
The~$(f_i, t_i)$ form a Poisson process~${F}^{+}$
on~${\cS \times \cT}$ (whose intensity is still~$\lambda$). 
Call the collection of birth times~$T^F$, and define~${F^{+} \equiv (F,T^F)}$ as the collection of (location,birth time) pairs. $T^F$ induces an ordering on the 
events in~${F}^{+}$, 
and a secondary process~${G^{+} \equiv (G, T^G)}$ is obtained by traversing~$F^{+}$ in this order and deleting all points within 
a distance~$R$ of any {earlier}, {undeleted} point. We obtain the \matern process~$G$ by projecting~$G^{+}$ onto~$\cS$.

   \begin{figure}
   \centering
     \includegraphics[width=.3\textwidth]{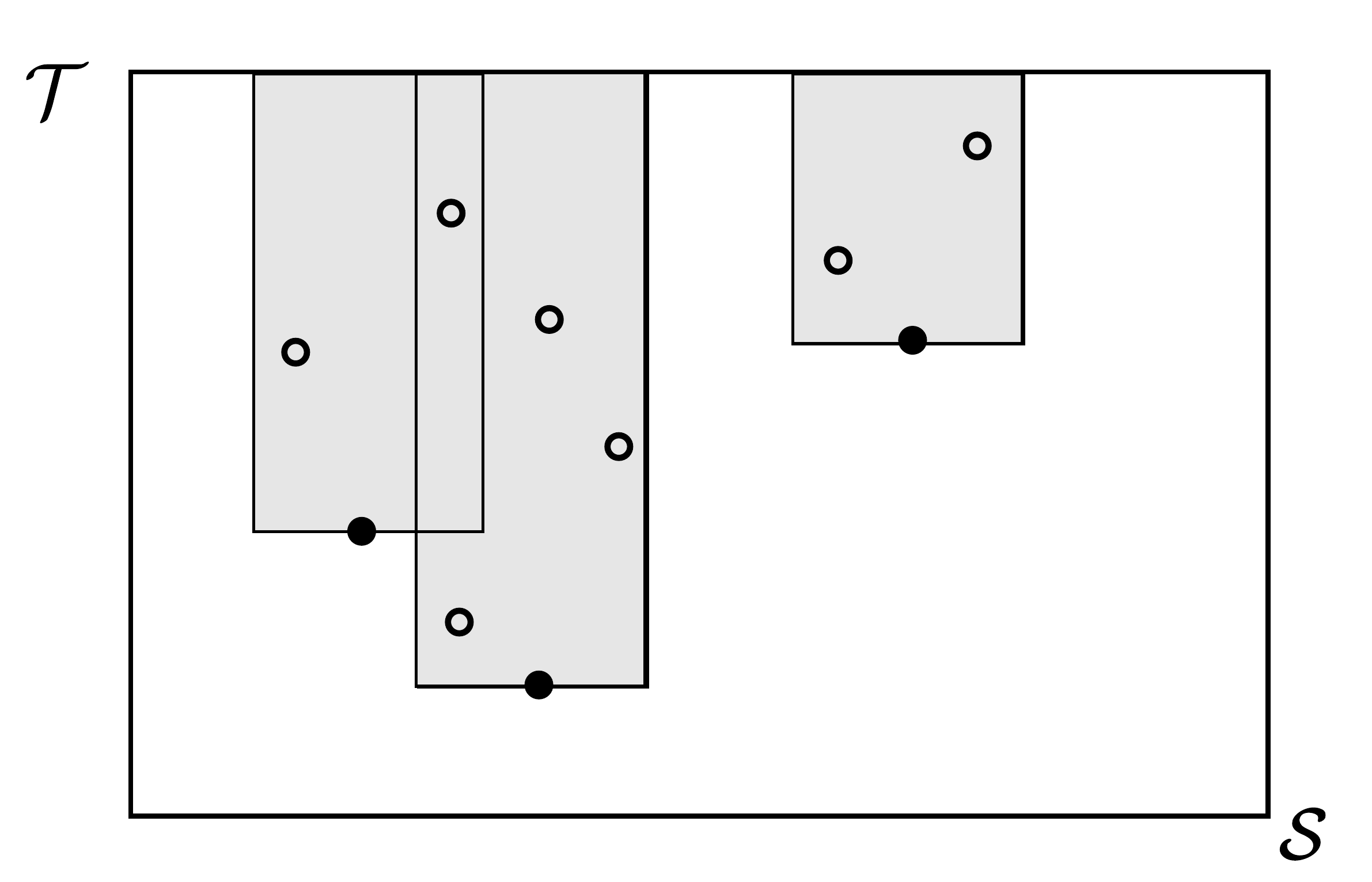}
    \includegraphics[width=.3\textwidth]{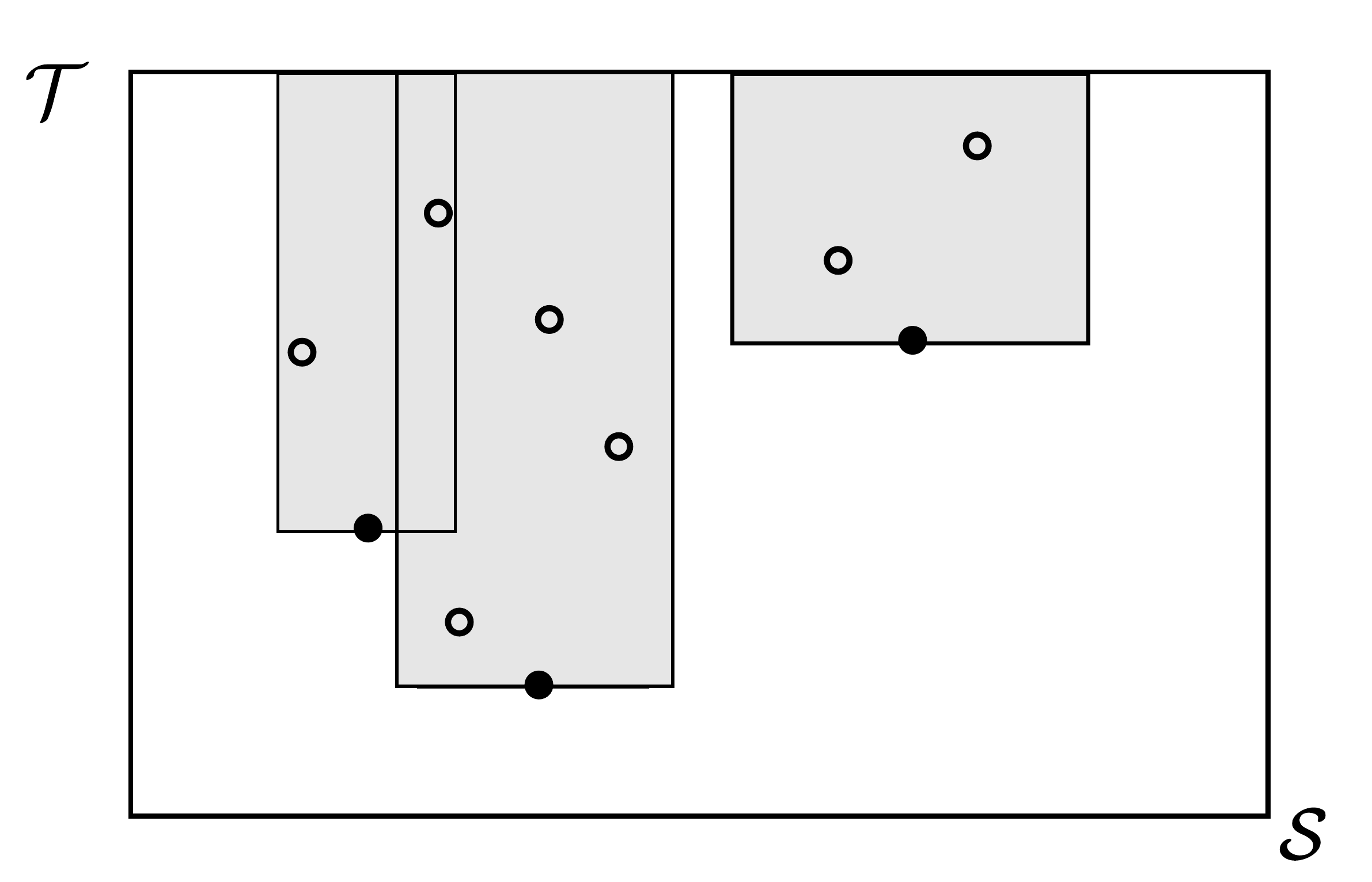}
    \includegraphics[width=.3\textwidth]{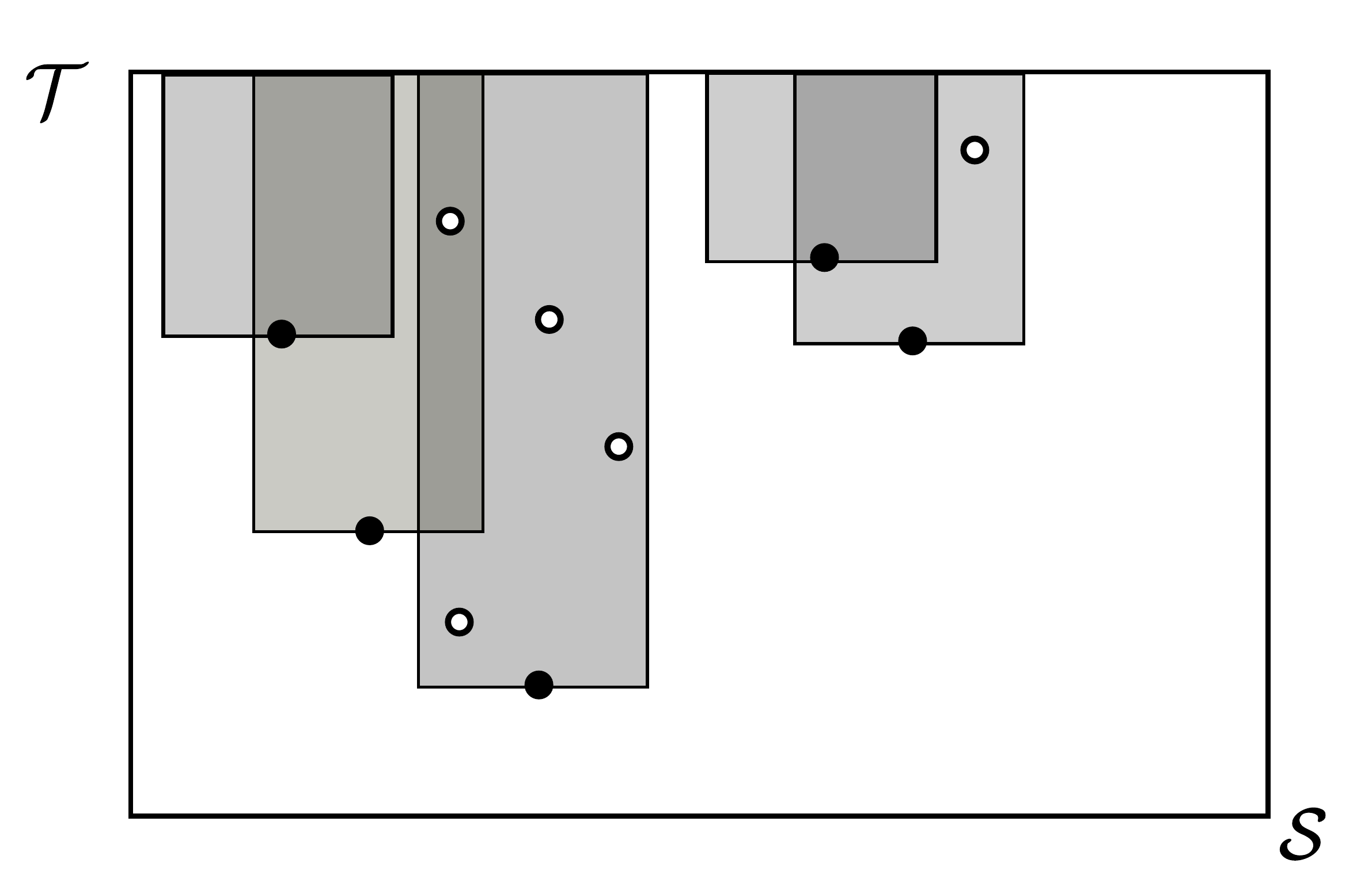}
 \caption[The \matern type-III hardcore point process]{(left) The \matern type-III hardcore point process in 1-d: filled dots (projected
  onto $\cS$) are \matern events, empty dots are thinned events.  The shaded
 region is the shadow.
(center) and (right): \matern type-III processes with varying radii and probabilistic deletion}
   \label{fig:matern3_fig}
   \end{figure}


Figure~\ref{fig:matern3_fig}(left) shows relevant events for the $1$-dimensional case. The filled dots
form the \matern process $G$ and the empty dots represent thinned events.
Both together form the primary process. 
Define the `shadow' of a point ${(s^*,t^*) \in \cS \times \cT}$ as the indicator function for all locations in ${\cS \times \cT}$ that would
be thinned by $(s^*,t^*)$. This is the set of all points whose $\cS$-coordinate is within $R$ of  $s^*$, and whose $\cT$-coordinate is greater than $t^*$. 
Letting $I$ be the indicator function, define the shadow 
of $(s^*,t^*)$ at $(s,t)$ as
\begin{align}
  \mathscr{H}(s,t;s^*, t^*, R) = I(t > t^*) I(\lVert s - s^* \rVert < R). 
\end{align}
The shadow of $G^+$ is all locations that would be thinned by any element of $G^+$: 
\begin{align}
  \mathscr{H}(s,t;G^+) = 1 - \!\!\!\!\!\!\! \prod_{(s^*,t^*) \in G^+} \!\!\!\!\!\!\! \left(1 - \mathscr{H}(s,t;s^*,t^*, R) \right).
\end{align}
For notational convenience, we drop the dependence of the shadow on $R$ above.
The shaded area in figure~\ref{fig:matern3_fig}(left) shows the shadow of all \matern events, $G^+$. Note that all 
thinned events must lie in the shadow of the \matern events, otherwise 
they couldn't have been thinned. Similarly, \matern events cannot lie in each
others shadows; however, they \emph{can} fall within the shadow of some thinned event.

The hardcore repulsive process can be generalized in a number of ways, 
For instance, instead of requiring all 
\matern events to have the same radius, we can assign each an independent radius drawn from 
some distribution $q(R)$. Such \matern processes are called softcore repulsive processes \citep{Hube:Wolp:2009}. In this case, the primary process
can be viewed as a Poisson process on a space whose coordinates are location $\mathcal{S}$, birth time $\mathcal{T}$ and radius
$\mathcal{R}$. Given a realization $F^+ \equiv (F, T^F, R^F)$, we define a secondary point process $G^+ \equiv (G, T^G, R^G)$ by deleting 
all points that fall within the radius associated with an older, undeleted primary event.  The locations $G$ constitute a sample from the 
softcore \matern type-III process. 
Given the triplet $(G, T^G, R^G)$, we can once again calculate the shadow $\mathscr{H}$, now defined as: 
\begin{align}
  \mathscr{H}(s,t;G^+) = 1 - \!\!\!\!\!\!\!\!\!\!\! \prod_{(s^*,t^*, r^*) \in G^+} \!\!\!\!\!\!\!\!\!\! \left( 1 - \mathscr{H}(s,t;s^*,t^*, r^*) \right).
\end{align}
Figure~\ref{fig:matern3_fig}(center) illustrates this; note that the radii of the thinned events are irrelevant. 

An approach to soft repulsion that we propose here is to \emph{probabilistically} thin events of the primary Poisson process. 
This is a flexible generalization of ideas present in the literature, allowing control over the strength of the repulsive effect as well as 
its span. The probability of deletion can be constant, or can depend on the distance of a point to a previous unthinned point, and a
primary event is retained only if it is left unthinned by all surviving points with earlier birth times. Write the deletion kernel associated
with location $s^*$ as $K(\cdot, s^*)$, so that the probability of thinning an event located at $(s,t)$ is given by $I(t > t^*)K(s, s^*)$. To keep this process efficient, 
one can use a deletion kernel with a compact support; figure~\ref{fig:matern3_fig}(right) illustrates the resulting shadow.
Where previously the \matern events defined a black-or-white shadow, now the shadow can have intermediate `grey' values corresponding
to the probability of deletion. Now the shadow at any location $(s,t) \in \cS \times \cT$ is given by
\begin{align}
  \mathscr{H}(s,t;G^+) &= 1 - \!\!\!\!\!\!\! \prod_{(s^*,t^*) \in G^+} \!\!\!\!\!\! \left(1 - I(t > t^*) K(s, s^*) \right). \label{eq:thin_shad}
\end{align}
Note that for this process, while the thinned events must still lie in the shadow $\mathscr{H}(s,t;G^+)$, \matern events \emph{can} lie in each other's shadow.
We recover the \matern processes with deterministic thinning by letting $K$ be the indicator function.

Another generalization is to allow the thinning probability to depend on the difference of the birth times of two events. This is useful in
applications where the repulsive influence of an event decays as times passes, and the thinning probability is given by~$K_1(t, t^*)K_2(s, s^*)$.
While we do not study this, we mention it to demonstrate the flexibility of the \matern framework towards developing realistic 
repulsive mechanisms.

Finally, we mention that repulsive processes on the real line can be viewed as generalized \matern type-III processes where birth-times are
observed. Probabilistic thinning recovers a class of self-inhibiting point processes commonly used to model neuronal spiking \citep{Brown2004a}. Similarly, renewal processes
can be viewed as a \matern type-III process where the shadow $\mathscr{H}$ has a special Markovian construction. 

\subsection{Probability density of the \matern type-III point process}  \label{sec:matern_pdf}

%
Let $(\cS, \Sigma)$ be a subset of a $d$-dimensional Euclidean space ($\Sigma$ is the Borel $\sigma$-algebra). For two points $\bs_1, \bs_2 \in \cS$,
define ${\bs_1 > \bs_2}$ if ${s_{1d} > s_{2d}}$. Thus we use the last coordinate to define a partial ordering on $\cS$.
A realization 
of a \matern process on $\cS$  is obtained by thinning a primary Poisson process, and has finite cardinality if the primary process is finite. 
We restrict ourselves to this case. Note that with probability one, there is unique ordering of the Poisson elements
according to the partial ordering we defined earlier. This will allow us to associate realizations of point processes
with unique, increasing sequences of points in $\cS$. Below, we describe this space of point process realizations.

For each~$n$, let~$\cS^n$ be the~$n$-fold product space with the usual product~$\sigma$-algebra,~$\Sigma^n$. We refer
to elements of~$\cS^n$ as~$S^n$. 
Define the union space ${\tilde{S}^{\cup} \equiv \bigcup_{n=0}^{\infty} \cS^n}$ (where 
$\cS^0$ is a point satisfying ${\cS^0 \times \cS = \cS \times \cS^0 = \cS}$) and assign it 
the~$\sigma$-algebra $\tilde{\Sigma}^{\cup} \equiv \{\bigcup_{n=0}^{\infty} A^n, \ \ \forall A^n \in \Sigma^n\}$. 
$\tilde{S}^{\cup}$ is the space of all finite sequences in~$\cS$, and define $(\Su, \Sigma^{\cup})$ as its restriction to {increasing} sequences
in $\cS$. We treat finite point processes as random variables taking values in $\Su$, and refer to elements of this space
by uppercase letters (e.g.\ $S$). 
For a finite measure $\mu$ on $(\cS,\Sigma)$ (e.g.\ Lebesgue measure), let $\mu^n$ be the $n$-fold product measure on~$(\cS^n, \Sigma^n)$.
Assign any set $B \in \Sigma^{\cup}$ the measure
\begin{align}
  \muu(B) &= \sum_{n=0}^{\infty} \mu^n(B \cap \cS^n)
         = \sum_{n=0}^{\infty} \int_{B \cap \cS^n}  \mu^n(\dif S^n). \label{eq:base_measure1}
\end{align}
Now, let $S$ be a realization of a Poisson process with mean measure $\Lambda$, and assume $\Lambda$ 
admits a density $\lambda$ with respect to $\mu$. 
Writing $|S|$ for the cardinality of $S$, we have:
\begin{thrm} \emph{(Density of a Poisson process)} \label{thrm:poiss_density}
  A Poisson process on the space $\mathcal{S}$  with intensity $\lambda(s)$ (where $\Lambda(\cS) = \int_{\cS} \lambda(s) \mu(\mathrm{d} s) < \infty$) is a random variable taking values in $(\Su, \Sigma^{\cup})$ with 
probability 
density w.r.t.\ the measure $\muu$
given by 
\vspace{-.05in}
\begin{align}
  p(S) = \exp(-\Lambda(\cS)) \prod_{j=1}^{|S|} \lambda(s_j)  \label{eq:poiss_prob}
\end{align}
\end{thrm}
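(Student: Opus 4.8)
The plan is to verify that $p$ is the claimed Radon--Nikodym derivative by integrating it against the base measure $\muu$ over an arbitrary event of $\Sigma^{\cup}$ and matching the result to the law of the Poisson process, using the elementary (count-plus-conditional) characterization of a Poisson process rather than its Laplace functional. First I would record the two standard facts about a Poisson process with finite mean measure $\Lambda$: (i) the total count $|S|$ is Poisson, $\Pr(|S|\te n)=e^{-\Lambda(\cS)}\Lambda(\cS)^n/n!$; and (ii) conditioned on $|S|\te n$, the points form an i.i.d.\ sample (as an unordered multiset) from the normalized intensity, each point having density $\lambda(s)/\Lambda(\cS)$ with respect to $\mu$. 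I would also observe that because $\Lambda$ admits a density with respect to $\mu$ it has no atoms, so with probability one no two points share the same last coordinate; hence the partial order on $\cS$ sorts the points of almost every realization into a unique strictly increasing sequence, which legitimizes treating the process as a random element of $(\Su,\Sigma^{\cup})$.

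Next I would push the conditional law in (ii) forward along the sorting map onto the increasing sequences. For fixed $n$, the exchangeable product density $\prod_{j=1}^n\lambda(s_j)/\Lambda(\cS)^n$ on $\cS^n$ is symmetric, so it splits its mass equally over the $n!$ orderings of a generic configuration; pushing forward onto the strictly increasing representatives therefore multiplies the density on that ordered region by $n!$. Thus, conditioned on $|S|\te n$, the increasing sequence has density $n!\prod_{j=1}^n\lambda(s_j)/\Lambda(\cS)^n$ with respect to $\mu^n$ restricted to the increasing part of $\cS^n$. Combining this with (i), for any $B\in\Sigma^{\cup}$,
\[
  \Pr(S\in B,\ |S|\te n)=\frac{e^{-\Lambda(\cS)}\Lambda(\cS)^n}{n!}\int_{B\cap\cS^n} n!\,\frac{\prod_{j=1}^n\lambda(s_j)}{\Lambda(\cS)^n}\,\mu^n(\dif S^n)=e^{-\Lambda(\cS)}\int_{B\cap\cS^n}\prod_{j=1}^n\lambda(s_j)\,\mu^n(\dif S^n).
\]
Summing over $n\ge 0$ (reading the $n\te 0$ term as the empty configuration, with empty product $1$ and $\mu^0(\cS^0)\te 1$, matching $\Pr(|S|\te 0)=e^{-\Lambda(\cS)}$), and using nonnegativity of every integrand to interchange the sum and the integral, gives $\Pr(S\in B)=\sum_{n\ge0}\int_{B\cap\cS^n}\prod_j\lambda(s_j)\,\mu^n(\dif S^n)=\int_B p(S)\,\muu(\dif S)$ by the definition~\eqref{eq:base_measure1} of $\muu$. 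Taking $B=\Su$ also confirms $\int_{\Su}p\,\dif\muu=\sum_{n\ge0}e^{-\Lambda(\cS)}\Lambda(\cS)^n/n!=1$, so $p$ is genuinely a probability density.

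The main obstacle is not any single hard estimate but getting the ordering bookkeeping exactly right: the combinatorial factor $n!$ that converts the exchangeable law on $\cS^n$ to the law on strictly increasing sequences must precisely cancel the $n!$ in the Poisson count probability, and one must invoke atomlessness of $\Lambda$ to guarantee that the increasing representative is almost surely unique, so that no positive-measure issue with tied configurations is swept under the rug. The remaining ingredients are routine: measurability of $B\cap\cS^n$ in $\Sigma^n$, and Tonelli's theorem for the sum--integral interchange, both immediate since all integrands are nonnegative.
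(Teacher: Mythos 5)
Your proposal is correct. Note that the paper does not actually prove this theorem in-house: it simply defers to Daley and Vere-Jones, adding only the remark that the ordered-sequence convention makes $p$ integrate to $1$ under $\muu$. Your argument is the standard derivation of the Janossy density and dovetails with that remark: condition on the Poisson count $|S|=n$, use the i.i.d.\ representation of the locations with density $\lambda(s)/\Lambda(\cS)$, push forward through the sorting map so that the factor $n!$ from the $n!$ orderings cancels the $1/n!$ in the Poisson count probability, and sum over $n$ via Tonelli against the definition of $\muu$ in~\eqref{eq:base_measure1}; the normalization check $\int_{\Su}p\,\dif\muu=1$ falls out for free. One small attribution to fix: the almost-sure uniqueness of the increasing representative does not follow from $\Lambda$ having a density with respect to $\mu$ per se (if $\mu$ had atoms, $\Lambda$ could too); it follows because $\mu$ is taken to be Lebesgue-like on a subset of $\mathbb{R}^d$, which is exactly the standing assumption under which the paper asserts the a.s.\ unique ordering, so this is a wording issue rather than a gap.
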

\vspace{-.02in}
For a proof, see \cite{DalVer2008a}. The density $p(S)$ is called the \emph{Janossy density}, and is 
used in the literature to define the Janossy measure. This measure is not a probability measure, however 
our definition of $S$ as an ordered sequence ensures 
\begin{align}
  \int_{\Su} p(S) \muu(\dif S) &= 1.
\end{align}
We now return to the \matern type-III process. Recall that events of the augmented primary Poisson process ${F^{+} = (F, T^F)}$ lie in the product space
$(\cS \times \cT)$, where $\cT$ is the unit interval. 
Since we order elements by their last coordinate, $F^+$ is a sequence with increasing birth-times $T^F$.
Let $\mu$ be a measure on $(\cS \times \cT)$; when $\cS$ is a subset of $\mathbb{R}^2$, $\mu$ is just 
Lebesgue measure on $\mathbb{R}^3$. 
By first writing down the density of the augmented primary Poisson process $F^+$, and then using the thinning construction of the 
\matern type-III process, we can calculate the probability density of the augmented \matern type-III process ${G^{+} = (G, T^G)}$ with respect to the 
measure ${\mu}^{\cup}$:
\begin{thrm}  \label{thrm:mat_dens} Let $G^+ = (G, T^G)$ be a sample from a generalized \matern type-III process, augmented with the birth 
  times. Let $\lambda$ be its intensity, and $\mathscr{H}(s,t;G^+)$ be its shadow following the appropriate thinning scheme. 
Then, its density w.r.t.\  ${\mu}^{\cup}$ is 
\begin{align}
 p(G^+ \given \lambda) &=\exp\left(-\lambda \int_{\cS \times \cT}\left( 1 - \mathscr{H}(s,t;G^+)\right)\mu(\dif s \,\dif t)\right)
              \lambda^{|G^+|}  \nonumber \\
         & \quad \times  \prod_{g^+ \in G^+} \left( 1 - \mathscr{H}(g^+;G^+) \right).
\label{eq:mat_marg_prob}
\end{align}
\end{thrm}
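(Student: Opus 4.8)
The plan is to obtain the density of $G^+$ as a marginal: first write down the joint density of the surviving events $G^+$ together with the thinned primary events, and then integrate the thinned events out using the Poisson structure of Theorem~\ref{thrm:poiss_density}.

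First I would set up the joint law. Write $D^+$ for the (increasing) sequence of thinned primary events. Because birth times are almost surely distinct, merging $G^+$ and $D^+$ by birth time reconstructs the augmented primary process $F^+$, and conversely; so it suffices to compute the density of the pair $(G^+,D^+)$ with respect to $\muu\otimes\muu$. Theorem~\ref{thrm:poiss_density}, applied to the homogeneous primary process, gives the density of $F^+$ as $\exp(-\lambda\,\mu(\cS\times\cT))\,\lambda^{|G^+|+|D^+|}$, and the joint density of $(G^+,D^+)$ is this quantity times the conditional probability of the realized pattern of survivals and deletions given $F^+$. To evaluate that conditional probability I would traverse $F^+$ in birth-time order while maintaining the set of survivors seen so far. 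The key point is that the shadow contribution of a point with birth time $t^*$ to a point with birth time $t$ vanishes unless $t>t^*$, so the shadow of the partial survivor set, evaluated at the current primary event, already equals $\mathscr{H}(\cdot;G^+)$; multiplying the per-event survival probabilities $1-\mathscr{H}(g^+;G^+)$ and deletion probabilities $\mathscr{H}(d^+;G^+)$ then yields
\begin{align}
  p(G^+,D^+\given\lambda) &= \exp\bigl(-\lambda\,\mu(\cS\times\cT)\bigr)\,\lambda^{|G^+|+|D^+|} \nonumber\\
  &\quad\times \prod_{g^+\in G^+}\bigl(1-\mathscr{H}(g^+;G^+)\bigr)\prod_{d^+\in D^+}\mathscr{H}(d^+;G^+). \nonumber
\end{align}
In the hardcore case the last product collapses to the indicator that the pattern is consistent with the deletion rule, so a single formula covers all of the thinning schemes in Section~\ref{sec:gen_mat}.

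Then I would integrate out $D^+$. Pulling the factors that do not involve $D^+$ out of the integral leaves $\int_{\Su}\lambda^{|D^+|}\prod_{d^+\in D^+}\mathscr{H}(d^+;G^+)\,\muu(\dif D^+)$; expanding $\muu$ as in \eqref{eq:base_measure1} and using that this integrand is symmetric in the thinned events (so that the integral over increasing $n$-tuples equals $1/n!$ times the integral over $(\cS\times\cT)^n$) collapses the series to $\exp\bigl(\lambda\int_{\cS\times\cT}\mathscr{H}(s,t;G^+)\,\mu(\dif s\,\dif t)\bigr)$. Combining this with the prefactor $\exp(-\lambda\,\mu(\cS\times\cT))$ gives $\exp\bigl(-\lambda\int_{\cS\times\cT}(1-\mathscr{H}(s,t;G^+))\,\mu(\dif s\,\dif t)\bigr)$, which is \eqref{eq:mat_marg_prob}. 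For the softcore variant one argues in the same way, additionally integrating out the (irrelevant) radii attached to the thinned events.

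The hard part will be the middle step: pinning down the conditional survival/deletion probability. Two things need care. First, identifying the shadow of the running survivor set with $\mathscr{H}(\cdot;G^+)$ relies on the birth-time ordering, and is precisely why the type-III mechanism (unlike the type-I and type-II processes) is tractable here. Second, one must check that the change of variables between $F^+$ with its survival labels and the pair $(G^+,D^+)$ carries no combinatorial or Jacobian factor --- this is the bookkeeping that the ordered-sequence / Janossy-density conventions of Section~\ref{sec:matern_pdf} are designed to make routine. Once these are settled, the rest is the standard Poisson--Janossy computation sketched above.
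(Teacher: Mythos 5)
Your proposal is correct and follows essentially the same route as the paper's appendix proof: write the joint density of the primary process and the survival/deletion labels (using the fact that, by the birth-time ordering, the shadow of the running survivor set already equals $\mathscr{H}(\cdot;G^+)$), then marginalize the thinned events, where the ordered-sequence convention supplies the $1/(n-k)!$ factors that sum the series to the exponential. The only difference is cosmetic bookkeeping — you integrate the thinned events directly over the union space via a symmetry argument, while the paper integrates locations and then ordered birth times in two steps.
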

We include a proof in the appendix. 
The product term in the expression above penalizes \matern events that fall within the shadow of earlier events; in fact, for deterministic thinning, such
an occurrence will have zero probability. The exponentiated integral encourages the shadow to be large, which in turn implies that the events are 
spread out. 

A similar result was derived in \cite{Hube:Wolp:2009}; they express the \matern type-III density with respect to a homogeneous Poisson 
process with unit intensity. However, their result applied only to the hardcore process. Also, their proof is less direct than ours, proceeding 
via a coupling from the past construction. It is not clear
how such an approach extends to the more complicated extensions we introduced above.

We now have the following corollary of the previous theorem:
\begin{coro} \label{prop:mat_post} Let ${G^+ = (G, T^G)}$ be a sample from a \matern type-III process 
 augmented with its birth times. Let $\lambda$ be its intensity, and $\mathscr{H}(s,t;G^+)$ its shadow.
Then, given $G^+$, the conditional distribution of the locations and birth times of the thinned events
${\tG^+ = (\tG,T^{\tG})}$ is a Poisson process on ${\cS \times \cT}$, with intensity $\lambda  \mathscr{H}(s,t; G^+)$.
\end{coro}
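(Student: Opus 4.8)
The plan is to compute the conditional density of $\tG^+$ given $G^+$ directly and recognize it as the Janossy density of a Poisson process with the claimed intensity, invoking Theorem \ref{thrm:poiss_density}. First I would set up the joint density of the full primary process $F^+ = G^+ \cup \tG^+$. Since $F^+$ is a Poisson process on $\cS \times \cT$ with intensity $\lambda$, Theorem \ref{thrm:poiss_density} gives its density with respect to $\mu^{\cup}$ as $\exp(-\lambda\,\mu(\cS \times \cT))\,\lambda^{|F^+|}$. The key structural observation, already emphasized in the text, is that the \matern thinning rule means a configuration $(G^+,\tG^+)$ is a consistent decomposition of a primary realization if and only if (i) every thinned event lies in the shadow $\mathscr{H}(\cdot;G^+)$ — more precisely, for deterministic thinning each $\tilde g^+ \in \tG^+$ is within $R$ of some earlier element of $G^+$, and for probabilistic thinning it is deleted with the appropriate kernel probability — and (ii) no element of $G^+$ lies in the shadow of an \emph{earlier} surviving event. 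Because the shadow $\mathscr{H}(s,t;G^+)$ depends only on $G^+$ and not on $\tG^+$ (an event's birth time and location determine whether it is shadowed, regardless of which other events were thinned), the survival/deletion decisions for distinct primary events are conditionally independent given $G^+$, each thinned event contributing a factor $\mathscr{H}(\tilde g^+;G^+)$ and each surviving event a factor $1-\mathscr{H}(g^+;G^+)$.

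Next I would divide: the conditional density of $\tG^+$ given $G^+$ is the joint density of $(G^+,\tG^+)$ divided by the marginal density of $G^+$ from Theorem \ref{thrm:mat_dens}. In the joint density, write $\lambda^{|F^+|} = \lambda^{|G^+|}\lambda^{|\tG^+|}$ and split the exponential $\exp(-\lambda\,\mu(\cS\times\cT)) = \exp\!\left(-\lambda\int(1-\mathscr{H})\,\dif\mu\right)\exp\!\left(-\lambda\int\mathscr{H}\,\dif\mu\right)$. Attaching the surviving-event factors $\prod_{g^+\in G^+}(1-\mathscr{H}(g^+;G^+))$ and the first exponential and $\lambda^{|G^+|}$ reconstitutes exactly the expression \eqref{eq:mat_marg_prob} for $p(G^+\given\lambda)$. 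Dividing it out leaves
\begin{align}
  p(\tG^+ \given G^+, \lambda) = \exp\!\left(-\lambda\int_{\cS\times\cT}\mathscr{H}(s,t;G^+)\,\mu(\dif s\,\dif t)\right)\prod_{\tilde g^+ \in \tG^+}\lambda\,\mathscr{H}(\tilde g^+;G^+),
\end{align}
which by Theorem \ref{thrm:poiss_density} is precisely the density of a Poisson process on $\cS\times\cT$ with intensity $\lambda\,\mathscr{H}(s,t;G^+)$, since $\int \lambda\,\mathscr{H}\,\dif\mu$ is its total mass.

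The main obstacle I expect is making the decomposition step rigorous: one must argue carefully that the map $(G^+,\tG^+)\mapsto F^+$ is measurable and essentially bijective onto the set of primary realizations consistent with the thinning rule, and that the thinning factorizes cleanly as a product over events. For deterministic thinning this requires verifying that, conditioned on $G^+$ being a valid \matern configuration, a primary event at $(s,t)$ is thinned \emph{iff} it lies in $\mathscr{H}(s,t;G^+)$ — the ``only if'' direction uses that a point outside the shadow of all surviving events could not have been deleted by any surviving earlier point, and a deleted point is never ``un-deleted'' later. For probabilistic thinning one additionally integrates out the independent deletion coin flips, each contributing the kernel factor; here care is needed because a retained event must survive \emph{all} earlier surviving events, so its retention probability is the product $\prod(1-I(t>t^*)K(s,s^*))=1-\mathscr{H}(g^+;G^+)$, matching the factor above. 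This bookkeeping is exactly what the proof of Theorem \ref{thrm:mat_dens} in the appendix must already establish, so the corollary follows from the same argument with the roles of the two exponential/product pieces simply regrouped.
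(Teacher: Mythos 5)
Your proposal is correct and follows essentially the same route as the paper's proof: write the joint density of $(\tG^+, G^+)$ as the Poisson density of $F^+$ times the product of shadow-based label probabilities (the paper's equation~\eqref{eq:mat_joint}, established in the appendix proof of Theorem~\ref{thrm:mat_dens}), divide by the marginal $p(G^+)$ from Theorem~\ref{thrm:mat_dens} via Bayes' rule, and identify the quotient as the Poisson density with intensity $\lambda\,\mathscr{H}(s,t;G^+)$ using Theorem~\ref{thrm:poiss_density}. The extra bookkeeping you flag (factorization of the thinning probabilities and the fact that the shadow depends only on $G^+$) is exactly what the appendix proof handles, so no new argument is needed.
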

\begin{proof}
Observe that the primary Poisson process $F^+ = (\tG^+ \cup G^+)$ (where the union of two ordered sequences is their concatenation followed
by reordering).  The joint $p(\tG^+, G^+)$ is the density of $F^+$ multiplied by the
probability that the elements of $F^+$ are assigned labels `thinned' or `not thinned'. 
The latter depends on the shadow $\mathscr{H}$, and it follows easily from Theorem~\ref{thrm:poiss_density} 
(see equation~\eqref{eq:mat_joint_app} in the appendix) that:
\begin{align}
   p(\tG^+, G^+) &= \exp(-\lambda \mu(\cS \times \cT)) \lambda^{n}  \! \! \! \!\! \prod_{(s,t) \in \tG^+}  \! \! \! \!\mathscr{H}(s,t;G^+)  \! \! \! \! \! \prod_{(s,t) \in G^+}  \! \! \! \!(1- \mathscr{H}(s,t;G^+)). \label{eq:mat_joint}
\end{align}
Plugging this, and equation \eqref{eq:mat_marg_prob} into Bayes' rule, we have
\begin{align}
  p(\tG^+ | G^+) &= \frac{p(\tG^+, G^+)}{p(G^+)} \\ 
  &= \exp \left(- \int_{\cS \times \cT} \!\!\!\lambda\, \mathscr{H}(s,t; G^+) \mu(\dif s\ \dif t) \right)
              \prod_{(\tilde{s},\tilde{t}) \in \tG^+} \lambda \mathscr{H}(\tilde{s}, \tilde{t}; G^+).  \label{eq:mat_post}
\end{align}
From Theorem~\ref{thrm:poiss_density}, this is the density of a Poisson process with intensity $\lambda \mathscr{H}(s, t; G^+)$. \hfill ${}_\blacksquare$ 
\end{proof}

The result above provides a remarkably simple characterization of the thinned primary events. 
Rather than having to resort to incremental birth-death schemes that update the thinned Poisson events one at a time, we can 
jointly simulate all of these from a Poisson process, directly obtaining their number and locations. 
Such an approach is much simpler and much more efficient, and it is
central to the MCMC sampling algorithm  we describe in the next section.

The intuition behind this result is that for a type-III process, a point of the primary Poisson process $F$ can
be thinned only by an element of the secondary process. Consequently, given the
secondary process, there are no interactions between the thinned events themselves: given the \matern process, the thinned events are just 
Poisson distributed. Such a strategy does not extend to
\matern type-I and II processes where the fact that thinned events \emph{can} delete each other means that the posterior is
no longer Poisson. For instance, for any of these processes, it is not possible for a thinned event to occur by itself within any neighbourhood
of radius $R$ (else it couldn't have been thinned in the first place). However, two or more events \emph{can} occur together. Clearly such a process is not 
Poisson, rather it possesses a clustered structure.

\section{Bayesian modeling and inference for \matern type-III processes} \label{sec:inf_mat}


  In the following, we model an observed sequence of points $G$ as a realization of a \matern type-III process. The parameters governing this are
the intensity of the primary process, $\lambda$, and the parameters of the thinning kernel, $\theta$. For the hardcore process, $\theta$ is just
the interaction radius $R$, while with probabilistic thinning, $\theta$ might include an interaction radius $R$, and a thinning
probability $p$ (with $p = 1$ recovering the hardcore model). 
For the softcore process, each \matern event has its own interaction radius which we have to
infer, and $\theta$ would be this collection of radii. In this case we might also assume that the distribution these radii are drawn from 
has unknown parameters.

Taking a Bayesian approach, we place priors on the unknown parameters. A natural prior for $\lambda$ is the conjugate Gamma density. The Gamma is also a 
convenient and flexible prior for the thinning length-scale parameter $R$. For the case of probabilistic thinning where $\theta = (p,R)$, we can place a Beta prior on the 
thinning probability $p$. For the softcore model, we model the radii as uniformly distributed over $[r_L, r_U]$, and place conjugate
hyperpriors on $r_L$ and $r_U$. For simplicity, we leave out any hyperparameters in what follows, and writing $q$
for the prior on $\theta$, we have
\begin{align}
  \lambda &\sim \text{Gamma}(a,b) \\
  \theta  &\sim q \\
  F^+ \equiv (F,T^F) & \sim \text{Poisson Process}(\lambda) \\
  G^+ \equiv (G,T^G) & \sim \text{Thin}(F^+, \theta)
\end{align}
Note that $G^+$ includes the \matern events $G$ as well as their birth times $T^G$; however, we only observe $G$. Given $G$, we require the posterior 
distribution $p(\lambda, \theta \given G)$. 
We will actually work with the augmented posterior $p(\lambda, \theta, F^+, T^G \given G)$. In particular, we set up a Markov chain whose state 
consists of all these variables, and whose transition operator is a 
sequence of Gibbs steps that conditionally update 
each of these four groups of variables. We describe the four Gibbs updates below.

\subsection{Sampling the thinned events}
Given the \matern events $G^+$ and the thinning parameters $\theta$, we can calculate the shadow $\mathscr{H}_{\theta}(s,t;G^+)$ 
(here we make explicit the dependence of $\mathscr{H}$ on~$\theta$).
Sampling the thinned events $\tG^+$ is now a straightforward application of Corollary~\ref{prop:mat_post}:
discard the old events, and simulate a new sequence from a Poisson process with intensity $\lambda \mathscr{H}_{\theta}(s,t;G^+)$. 

We do this by applying the thinning theorem for Poisson processes (\cite{Lewis1979}, see also Theorem~\ref{thrm:Thin}). 
In particular, we first sample 
a homogeneous Poisson process with intensity $\lambda$ on $(\cS \times \cT)$ and keep each point $(s,t)$ of this process with probability 
$\mathscr{H}_{\theta}(s, t;G^+)$. The surviving points form a sample
from the required Poisson process. For models with deterministic thinning, $\mathscr{H}_{\theta}$ is a binary function, and the posterior is just a Poisson
process with intensity $\lambda$ restricted to the shadow. 
Note that this step eliminates the need for any birth-death steps, and provides a simple and global way to vary the number of thinned events from 
iteration to iteration. 

\subsection{Sampling the birth times of the \matern events}
From equation \eqref{eq:mat_joint}, we see that the birth times $T_G$ of the \matern events have density
\begin{align}
 p(T^G | G, F^+, \lambda, \theta) &\propto 
              \prod_{({s}, {t}) \in G^+} \left(1 - \mathscr{H}({s},{t};G^+) \right)
              \prod_{({s}, {t}) \in \tG^+} \mathscr{H}({s},{t};G^+)  \label{eq:birth_joint}
\end{align}
A simple Markov operator that maintains equation \eqref{eq:birth_joint} as its stationary distribution is a Gibbs sampler that
updates the birth times one at a time. For each \matern event~${g \in G}$, we look at all primary events (thinned or not) within
distance $r^g$ (where $r^g$ is the interaction radius associated with $g$). The birth times of these events segment the unit interval into a number of regions, 
and the birth time $t^g$ of $g$ is uniformly distributed within each interval (since as $t^g$ moves over an interval, the shadow at all 
primary events remains unchanged). 
As $t^g$ moves from one segment to the next, one of the primary events moves into or out of the shadow of $g$. The probability of any interval is then proportional to
the probability that these neighboring events are assigned their labels `thinned' or `not thinned' under the shadow that results when $g$ is assigned to that 
interval; this is easily calculated for each thinning mechanism.
For instance, in the \matern process with deterministic thinning, 
the birth time $t^g$ is uniformly distributed on the interval $[0, t_{min}]$, 
where $t_{min}$ is the time of the oldest event that is thinned only by $g$ ($t_{min}$ equals $1$ if there is no
such event). 

While it is not hard to develop more global moves, we found it sufficient to sweep through the \matern events, 
sequentially updating their birth times. This, together with jointly updating all thinned event locations and birth times was 
enough to ensure that the chain mixed rapidly.

\subsection{Sampling the Poisson intensity} \label{sec:Poiss_int_inf}
Having reconstructed the thinned events, it is easy to resample the intensity  $\lambda$.
Note that the number of primary events $|F|$ is Poisson distributed with intensity $\lambda$. 
With a conjugate Gamma$(a,b)$ prior on the Poisson intensity $\lambda$, 
the posterior is also Gamma distributed, with parameters 
$a_{post} = a + |F|$,
$\frac{1}{b_{post}} = \frac{1}{b} + \frac{1}{\mu(\cS)}$. 

\subsection{Sampling the thinning kernel parameter $\theta$}
Like the birth times $T^G$, the posterior distribution of $\theta$ follows from equation \eqref{eq:mat_joint}. For a prior $q(\theta)$, 
the posterior is just 
\begin{align}
 p(\theta | G^+, F^+, \lambda) &\propto  q(\theta) 
           \!\!\!   \prod_{({s}, {t}) \in G^+}\!\!\! \left( 1 - \mathscr{H}_{\theta}({s},{t};G^+)\right)  
           \!\!\!   \prod_{({s}, {t}) \in \tG^+}\!\!\! \mathscr{H}_{\theta}({s},{t};G^+)  
\end{align}
Again, sampling $\theta$ is equivalent to sampling a latent variable in a two-class classification problem,
with the \matern events and thinned events corresponding to the two classes. Different values of $\theta$ result in different shadows
$\mathscr{H}_{\theta}(\cdot; G^+)$, and the likelihood of $\theta$ is determined by the probability of the labels
under the associated shadow.
For the models we consider, this results in a simple piecewise-parametric posterior distribution.

For the \matern hardcore process, ${\theta = R}$ is the interaction radius, whose posterior distribution is a truncated version of the prior.
The lower bound requires that no thinned event lies outside the new shadow, while the upper bound
requires that no \matern event lies inside the shadow. 
Sampling from this is straightforward. 
The same applies for the softcore model, only now, each \matern event has
its own interaction radius, and we can conditionally update them. {Given the set of radii, updating any hyperparameters of $q$ is easy}.
Finally, for the model with probabilistic thinning, we have $\theta = (p,R)$. To simulate $R$, we segment the positive real line into a finite number of segments, with
the endpoints corresponding to values of $R$ when a primary event moves into the shadow of a \matern event. Over each segment, the likelihood remains constant.
It is a straightforward matter to sample a segment, and then conditionally simulate a value of $R$ within that segment.
To simulate $p$, we simply count how many opportunities to thin events were taken or missed, and with a Beta prior on $p$, the posterior follows easily.

\section{Experiments}
  \begin{figure}
  \begin{minipage}[h]{0.32\linewidth}
  \centering
    \includegraphics[width=0.99\textwidth]{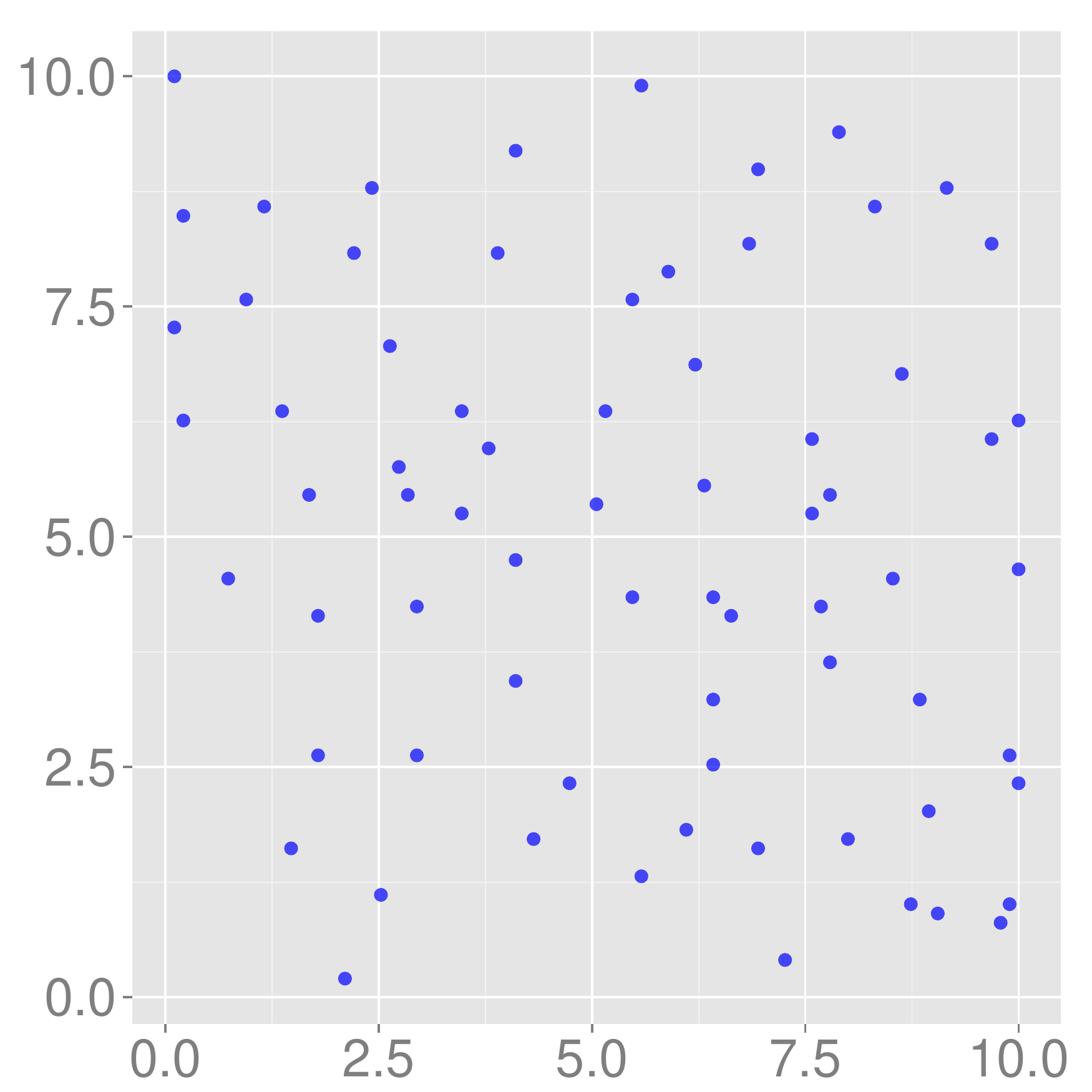} 
  \end{minipage}
  \begin{minipage}[h]{0.32\linewidth}
  \centering
  \includegraphics[width=0.99\textwidth]{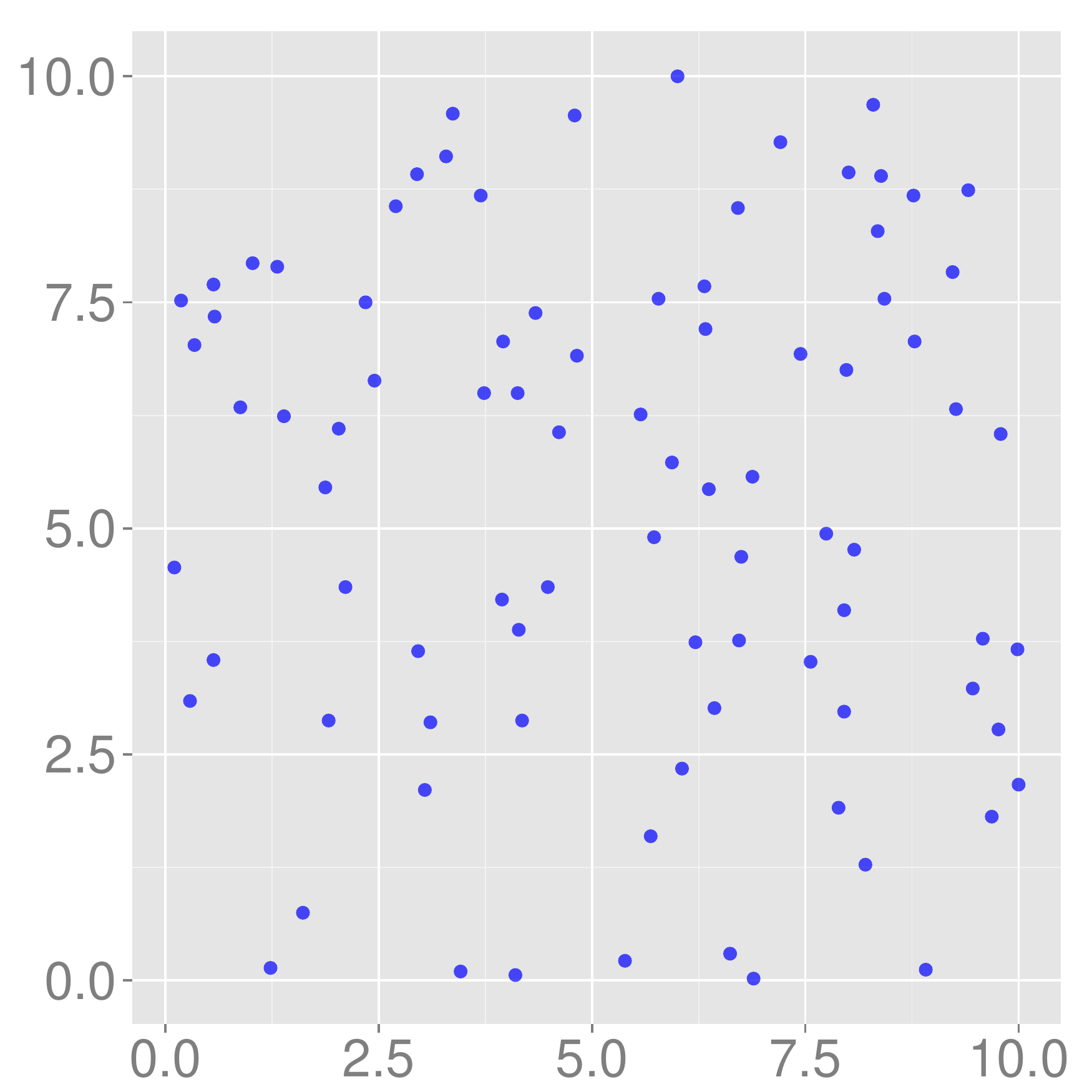}
  \end{minipage}
  \begin{minipage}[h]{0.32\linewidth}
  \centering
  \includegraphics[width=0.99\textwidth]{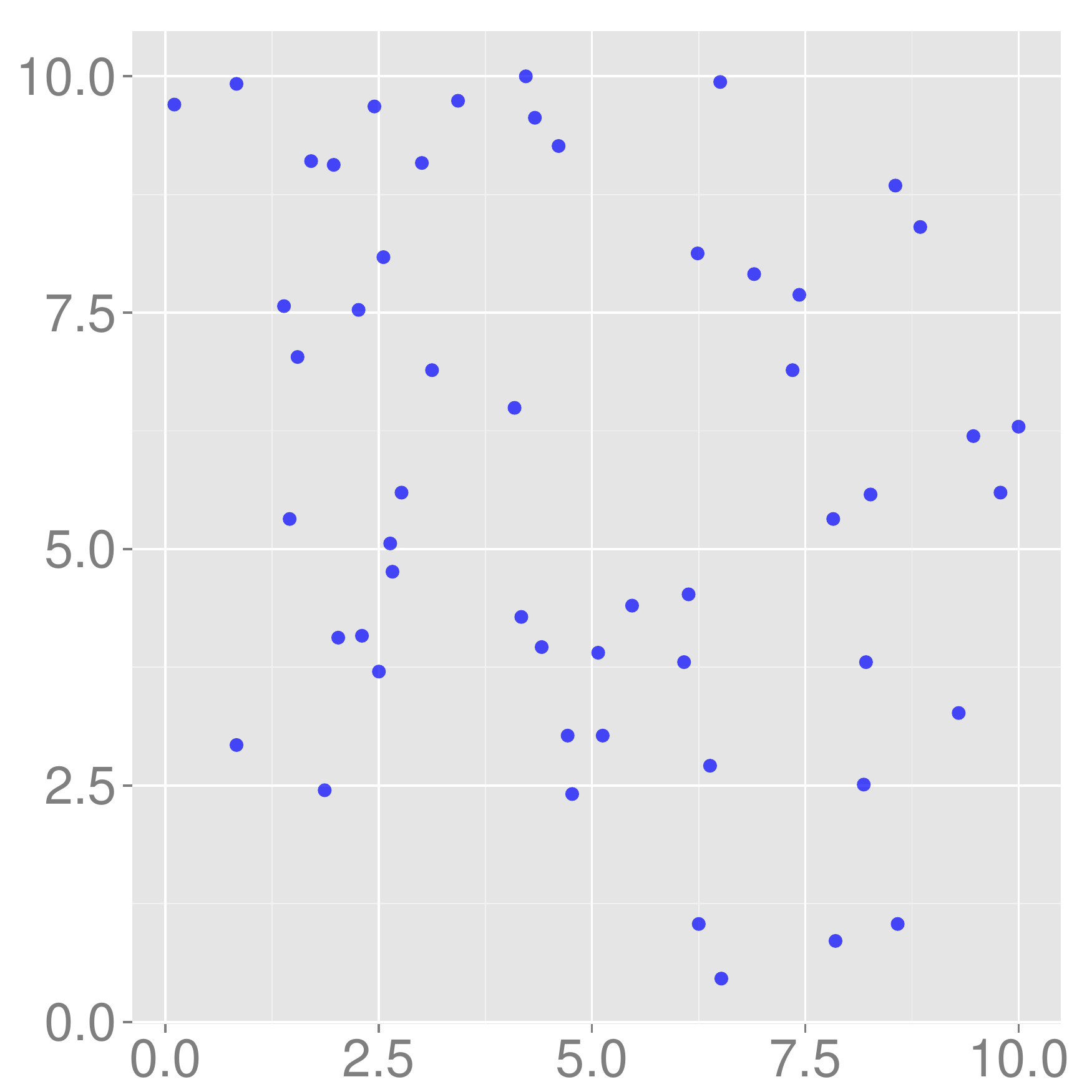}
  \end{minipage}
  \caption[Post Pred]{(Left): Swedish pine tree dataset. (Center and right): Nerve fiber entry points for mild and moderate neuropathy.}
  \label{fig:diab}
  \end{figure}
  We start with the classic Swedish pine tree dataset \citep{Ripley88},
available as part of \texttt{R} package \texttt{spatstat} \citep{spatstat}. 
This consists of the locations of $71$ trees which we normalized to lie in a $10$-by-$10$ square (see figure~\ref{fig:diab}(left)), 
and which we modeled as a realization of a \matern type-III hardcore point process. 
We placed a $\text{Gamma}(1,1)$ prior on the intensity $\lambda$ (noting there are about $100$ points in a $10\times10$ square), and a flat prior on 
the radius $R$ (noting that $R$ cannot exceed the minimum separation of the \matern events). 
All results were obtained from $10000$ iterations of our MCMC sampler after discarding the first
$1000$ samples as burn-in. 
A Matlab implementation of our sampler 
on an Intel Core 2 Duo $3$Ghz CPU took around a minute to produce $10000$ MCMC samples.
To correct for correlations across MCMC iterations, and to assess mixing for our MCMC sampler, we used software from \cite{Rcoda2006} to
estimate the effective sample sizes (ESS) of the various quantities; this gives the number of independent samples with the same information content as
the MCMC output.
Table~\ref{tab:ess_mat_hc} shows these values, suggesting our sampler mixes rapidly.

  \begin{figure}
  \begin{minipage}[h]{0.233\linewidth}
  \caption[Posterior distributions of the homogeneous \matern type-III hardcore model for the Swedish pine tree dataset]
  {Swedish tree dataset:  \matern hardcore posteriors of intensity, thinning radius, and number of thinned events}
  \label{fig:swedish_hc_post}
  \end{minipage}
  \begin{minipage}[h]{0.25\linewidth}
  \centering
    \includegraphics[width=0.99\textwidth]{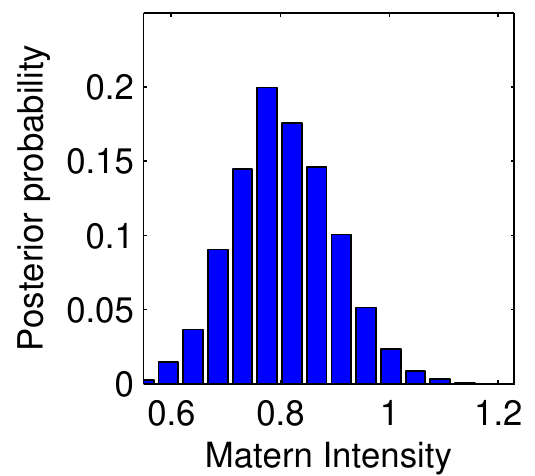} 
  \end{minipage}
  \begin{minipage}[h]{0.25\linewidth}
  \centering
    \includegraphics[width=0.99\textwidth]{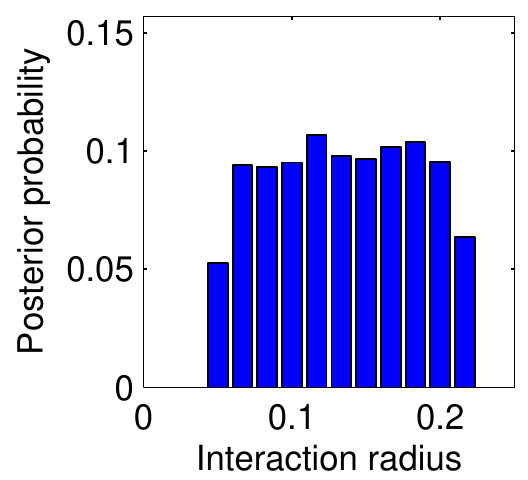} 
  \end{minipage}
  \begin{minipage}[h]{0.25\linewidth}
  \centering
    \includegraphics[width=0.99\textwidth]{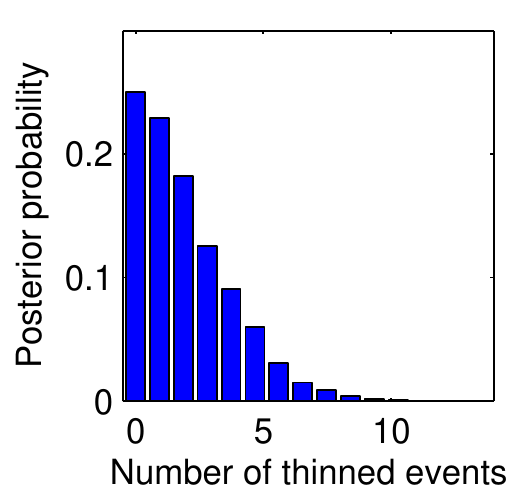} 
  \end{minipage}
  \end{figure}

The left and center plots in figure~\ref{fig:swedish_hc_post} show the posterior distributions over $\lambda$ and $R$ 
for the Swedish dataset.
Recall that the area of the square is $100$, while the dataset has $71$ \matern events.
The fact that the posterior over the intensity $\lambda$ concentrates around 
$0.8$ suggests that the number of thinned events is small. This is confirmed by the rightmost plot 
of figure~\ref{fig:swedish_hc_post} which shows the posterior distribution over the number of thinned  points is less than $10$.
\begin{table}
  \caption{\label{tab:ess_mat_hc}Effective sample sizes (per 1000 samples) for the \matern type-III process } 
\centering
\fbox{%
\begin{tabular}{|c|c|c|}
\hline
 & Swedish pine dataset & Nerve fiber data\\
 & (Hardcore model) & (Probabilistic thinning)\\
\hline
\matern interaction radius &  $344.51$ & $121$\\
\hline
Latent times (averaged across observations) & $989.4$ &  $880.1$\\
\hline
Primary Poisson intensity & $954.7$ & $680.2$ \\
\hline
\end{tabular}}
\end{table}
While the data in figure~\ref{fig:diab} are clearly underdispersed, few thinned events suggests a weak repulsion.
This mismatch is a consequence of occasional nearby events in the data, and the fact that the hardcore model requires
all events to have the same radius. Consequently, the interaction radius must be bounded by the minimum inter-event distance.
 Figure~\ref{fig:swedish_hc_post} (middle) shows that under the posterior, the interaction radius is bounded around $0.2$, much less than
 the typical inter-event distance.

  Figure~\ref{fig:swed_L_pred} quantifies this lack of fit, 
  showing Besag's L-function $L(r)$ \citep{Besag77}.
  For convenience, by $L(r)$ we mean~${L(r) - r}$, which measures the excess number events (relative to a Poisson process) within 
  a distance $r$ of an arbitrary event. A Poisson process has ${L(r) = 0}$, while
  ${L(r) < 0}$ indicates greater regularity than Poisson at distance $r$. 
The continuous magenta line in the subplots shows the empirical $L(r)$ as a function of distance $r$ for the 
  pine tree dataset. 
  We see that the data are much more regular than Poisson, especially over distances between $0.5$ to $1.5$ where a clear repulsive trend is seen.
  The blue envelope in the top left subplot shows the posterior predictive distribution of the L-function 
  from the type-III hardcore process. For this, at each MCMC iteration we generated a new realization of the hardcore process with the 
  current parameter values, and estimated $L(r)$ for that. The dashed line is the mean $L(r)$ across MCMC samples, while the two bands show $25\%$-$75\%$ and
  $2.5\%$-$97.5\%$ values. 
  While the data are clearly repulsive, the predictive intervals for the hardcore process in figure~\ref{fig:swed_L_pred} do not match the trend observed in the data, 
  corresponding instead to a Poisson process with a small $R$. 

  \begin{figure}
  \begin{minipage}[h]{0.22\linewidth}
  \caption[Post Pred]{Posterior predictive values of L-functions for the Swedish pine tree dataset. (Top left) is the \matern hardcore model,
  (top right) is the softcore model, and (bottom left) is probabilistic thinning. (Bottom right): predictive values for a Strauss process fit.}
  \label{fig:swed_L_pred}
  \end{minipage}
  \begin{minipage}[h]{0.76\linewidth}
  \begin{minipage}[h]{0.49\linewidth}
  \centering
  \includegraphics[width=0.99\textwidth]{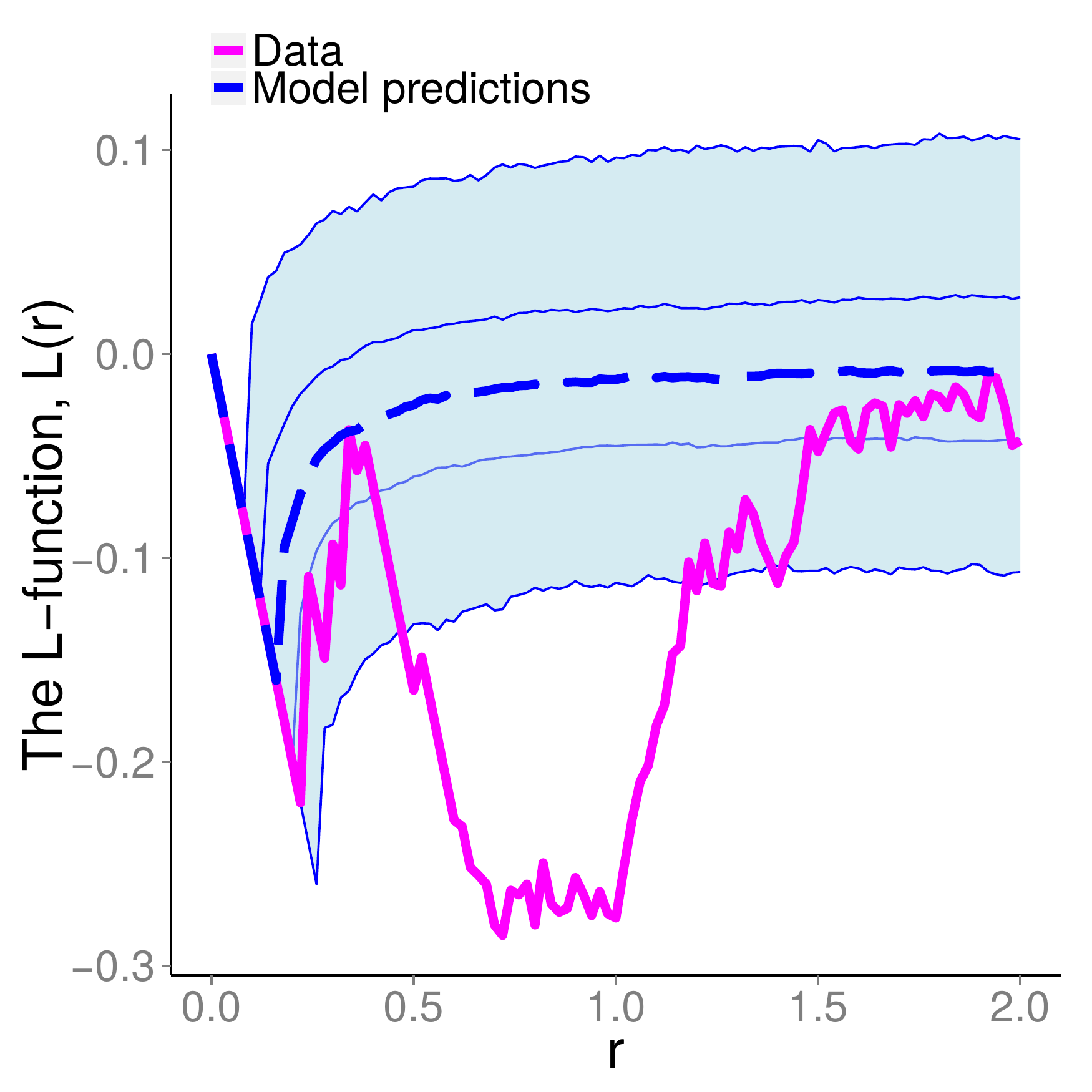}
  \end{minipage}
  \begin{minipage}[h]{0.49\linewidth}
  \centering
  \includegraphics[width=0.99\textwidth]{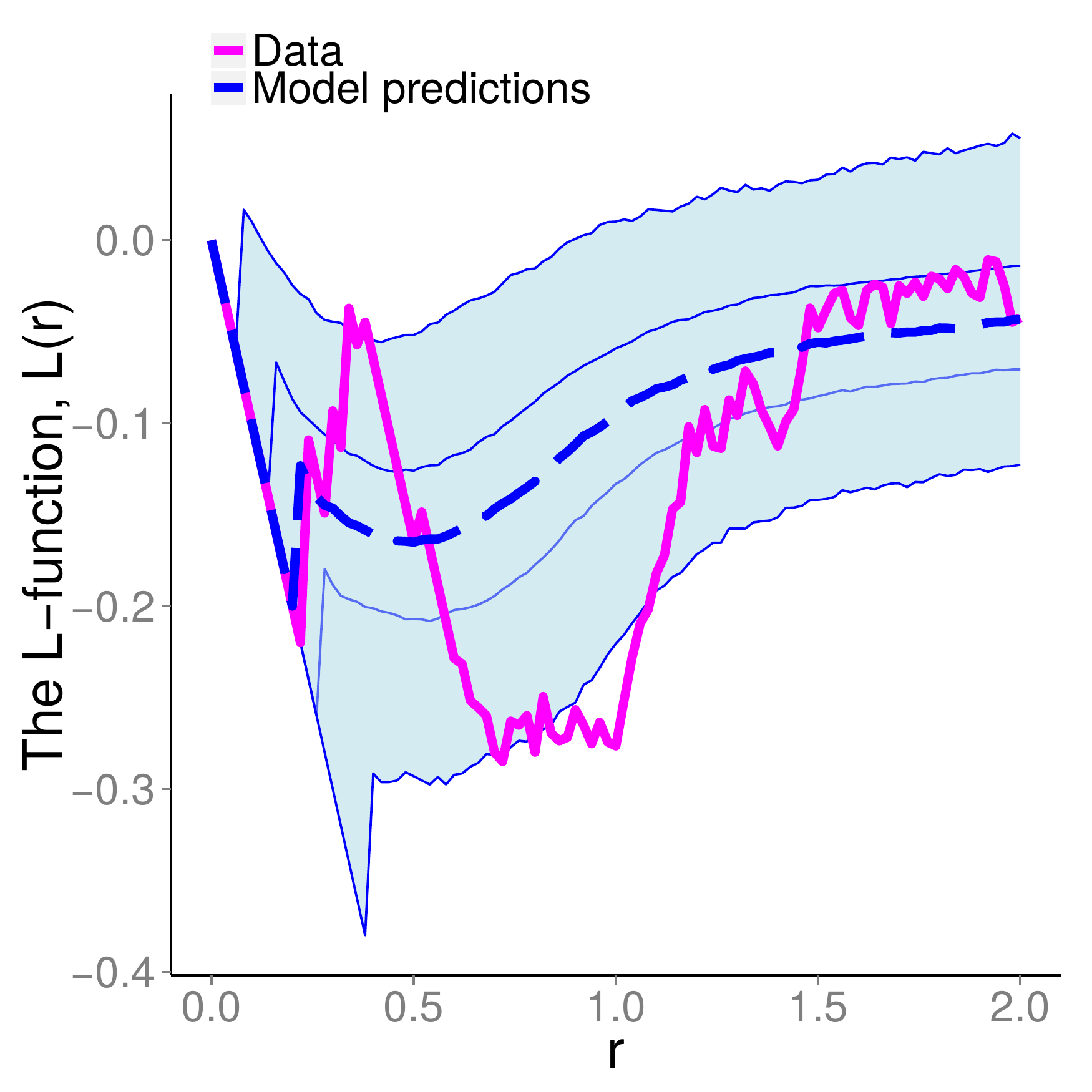}
  \end{minipage}
  \begin{minipage}[h]{0.49\linewidth}
  \centering
  \includegraphics[width=0.99\textwidth]{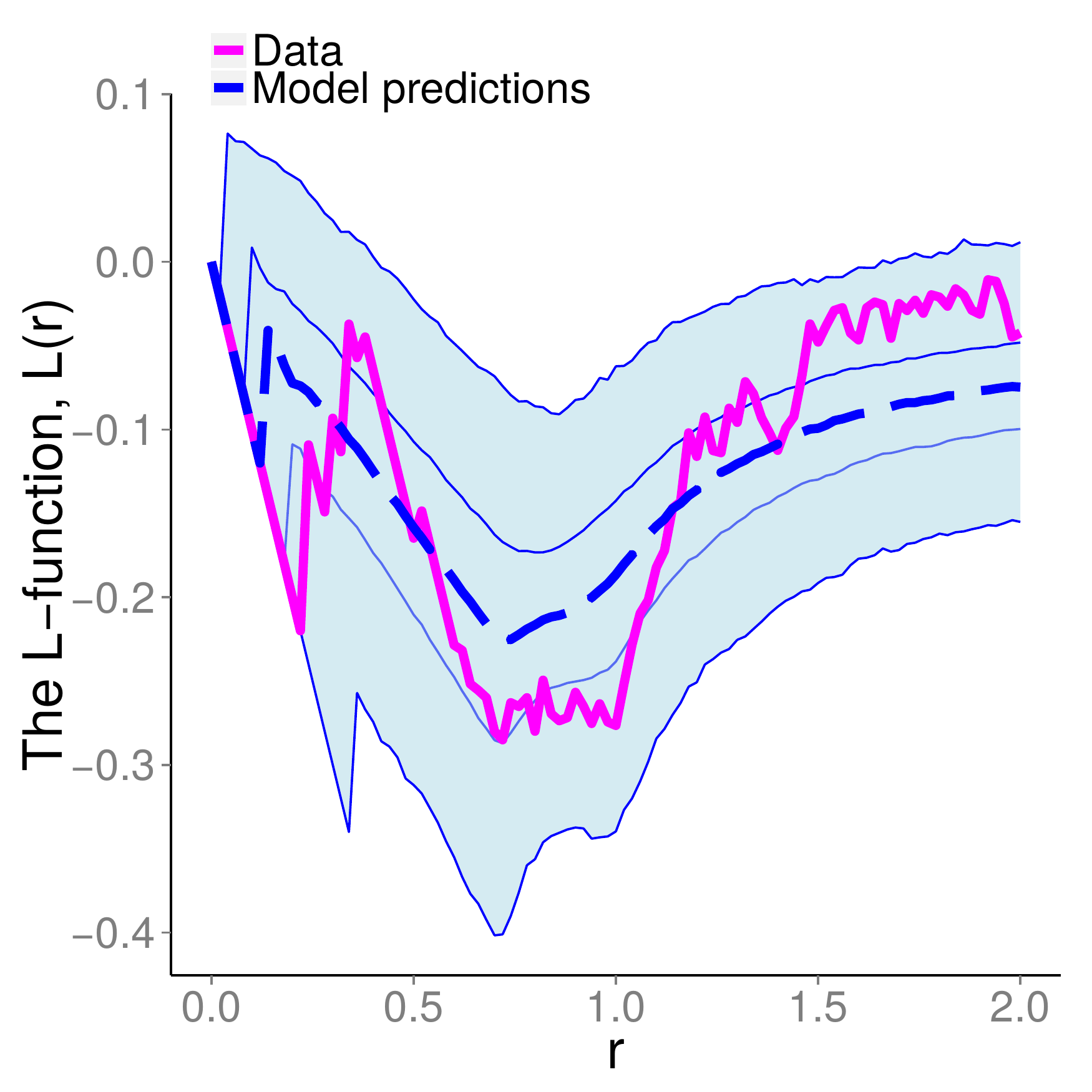}
  \end{minipage}
  \begin{minipage}[h]{0.49\linewidth}
  \centering
  \includegraphics[width=0.98\textwidth]{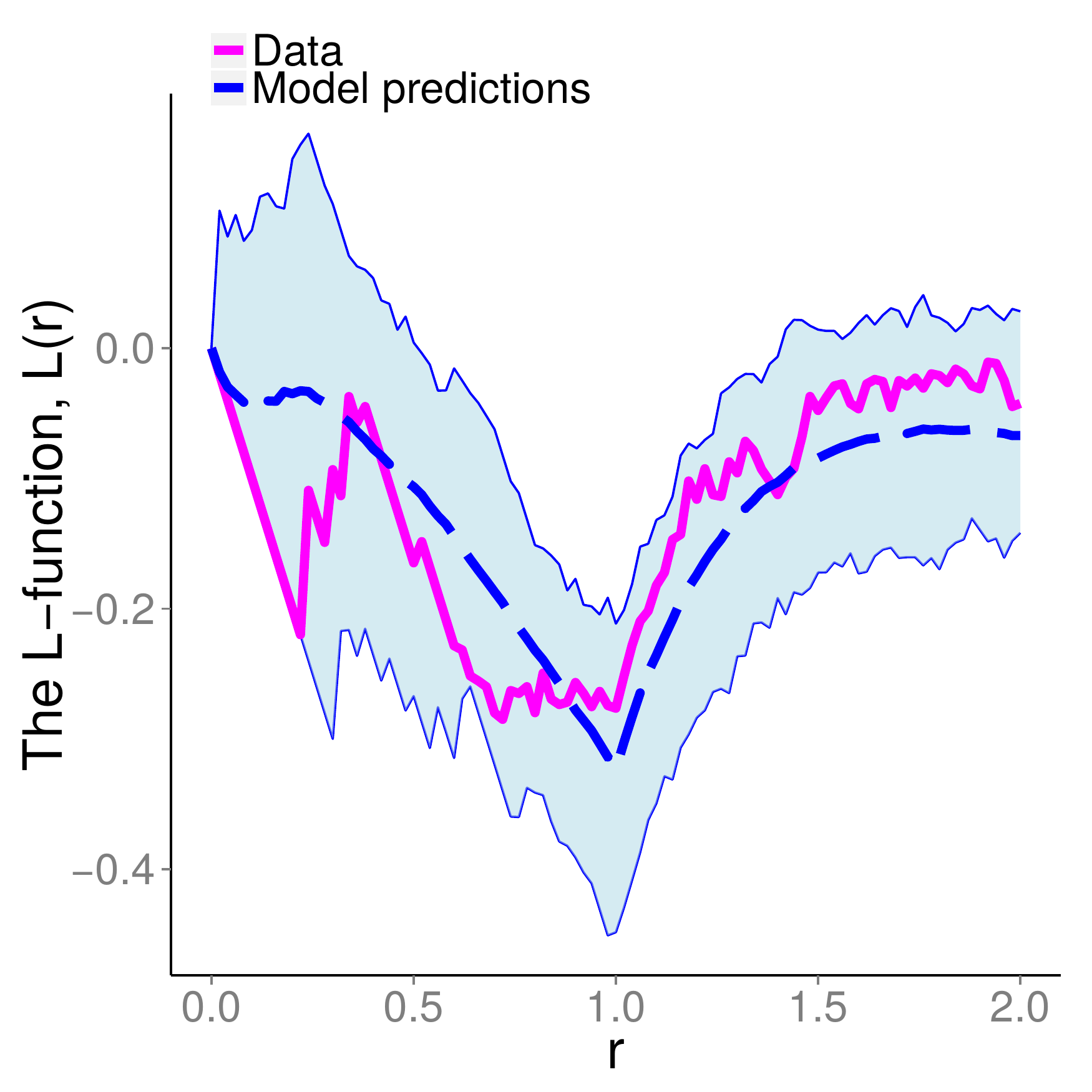}
  \end{minipage}
  \end{minipage}
  \end{figure}
We next fit the data using the softcore model 
(where each \matern event has its own interaction radius). 
We allowed the radii to take values in the interval $[r_L,r_U]$, and placed conjugate priors on these limits of this interval.
Figure~\ref{fig:swed_L_pred}(top right) shows the posterior predictive intervals for the $L$ functions, and while we do see some of the trends present
in the data, the fit is still not adequate. The problem now is that typical inter-event distances in the data are around $.5$ to $1$; however the 
presence of a few nearby points requires a small lower-bound $r_L$. 
The model then predicts many more nearby points than are observed in the data.

One remedy is to use some other distribution over the radii (e.g., a truncated Gaussian),
so that rather than driving the \matern parameters, occasional nearby events can be viewed as atypical.
Instead, we use our new extension with probabilistic thinning. This is easier to specify,
having only an interaction radius $R$ common to all \matern events, and a thinning probability $p$.
We place a unit mean $\text{Gamma}(1,1)$ 
and a flat $\text{Beta}(1,1)$ prior on these.
As before, we place a  $\text{Gamma}(1,1)$ prior on the Poisson intensity $\lambda$.


Figure~\ref{fig:swed_L_pred}(bottom left) shows the results from our MCMC sampler. 
The predicted  L-function values now mirror the empirical values much better, and in general we find probabilistic thinning strikes a 
better balance between simplicity and flexibility. The supplementary material also includes plots for another statistic,
the J-function \citep{vanLies96} agreeing with these conclusions.
Figure~\ref{fig:swedish_prob} plots the posteriors over the interaction radius, the thinning probability and the number of thinned events.
The first and last are much larger than the hardcore model, and coupled with a thinning probability of around $0.6$ confirm that the data are
indeed a repulsive process. 

We also estimated point process Fano factors by
dividing the space into $25$ $2 \times 2$ squares, and counting the number of events falling in each. Our estimate was then the standard 
deviation of these counts divided by the mean. A Poisson would have this equal to $1$, with values less than one suggesting regularity. The
rightmost subplot of figure~\ref{fig:swedish_prob} shows the empirical Fano factor (the red circle), as well as posterior predictive values 
(again obtained by calculating the Fano factors of synthetic datasets generated each MCMC iteration).
The plots confirm both that the data are repulsive and that the model captures this aspect of repulsiveness.

  \begin{figure}
  \centering
  \begin{minipage}[b]{0.245\linewidth}
  \centering
    \includegraphics[width=0.98\textwidth]{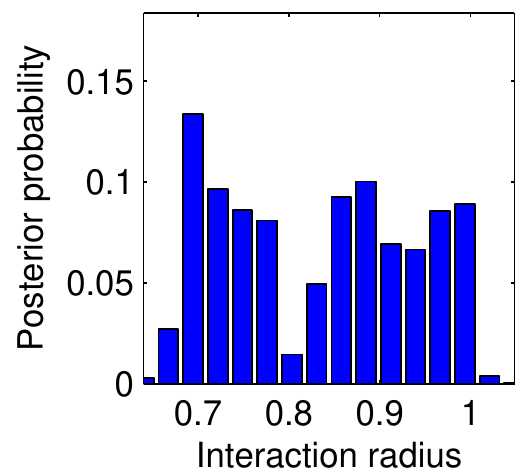} 
  \end{minipage}
  \begin{minipage}[b]{0.245\linewidth}
  \centering
    \includegraphics[width=0.98\textwidth]{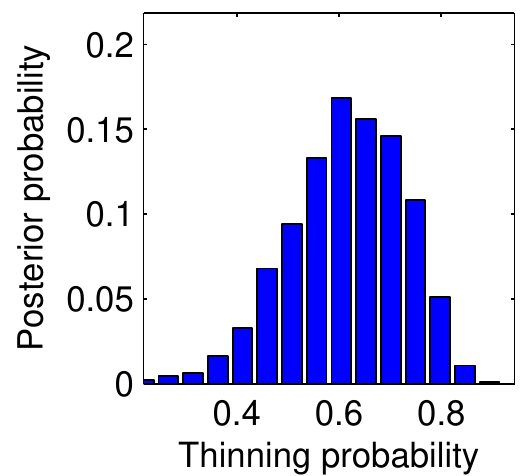} 
  \end{minipage}
  \begin{minipage}[b]{0.245\linewidth}
  \centering
    \includegraphics[width=0.98\textwidth]{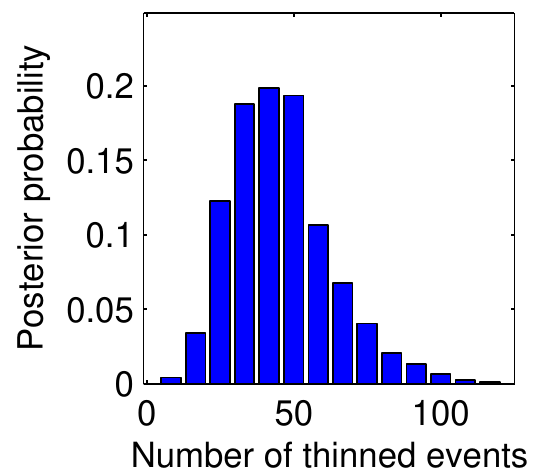} 
  \end{minipage}
  \begin{minipage}[b]{0.245\linewidth}
  \centering
    \includegraphics[width=0.98\textwidth]{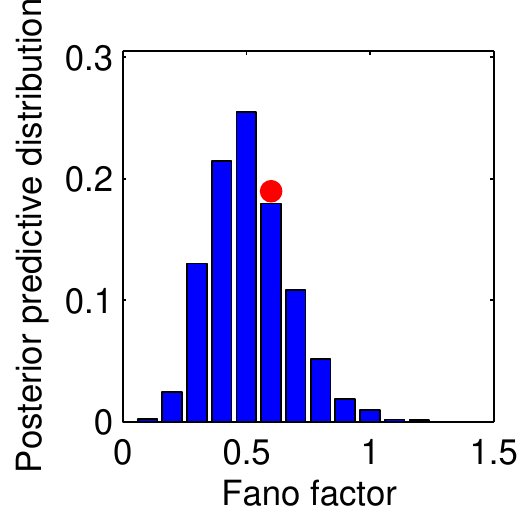} 
  \end{minipage}
  \caption[Posterior distributions of the homogeneous \matern type-III model with probabilistic thinning for the Swedish pine tree dataset]
      {Swedish tree dataset: posterior distributions of interaction radius $R$ (first), the thinning probability $p$ (second), and the number
      of thinned events (third). Rightmost is the empirical (circle) as well posterior predictive distribution over the Fano factor.
  }
  \label{fig:swedish_prob}
  \end{figure}
As a comparison, we also include a fit from a Gibbs-type process (a Strauss process, in particular) in the bottom right of 
figure~\ref{fig:swed_L_pred}. These
predictions of the L-function were produced from an MLE fit (using available routines from \texttt{spatstat}). For this dataset, the fit is comparable with
that of the \matern process with probabilistic thinning (though the L- and J-functions are only just contained in the $95\%$ prediction
band). Our next section considers a dataset where the Strauss process fares much worse.
\subsection{Nerve fiber data}
  \begin{figure}
  \centering
  \begin{minipage}[h]{0.22\linewidth}
    \centering \vspace{-.1in}
  \caption{Posterior distribution over thinning area times thinning probability for mild (left) and moderate (right) neuropathy}
  \label{fig:rep_inf}
  \end{minipage}
  \begin{minipage}[h]{0.3\linewidth}
  \centering
  \includegraphics[width=0.98\textwidth]{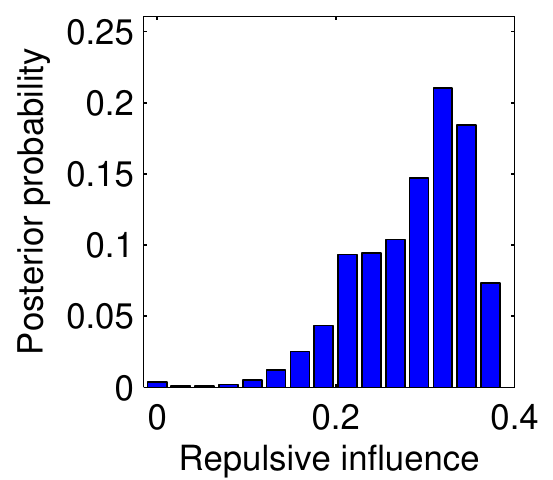}
  \end{minipage}
  \begin{minipage}[h]{0.3\linewidth}
  \centering
  \includegraphics[width=0.98\textwidth]{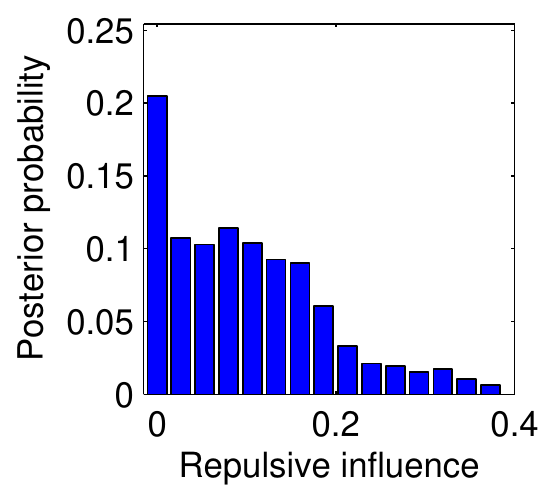}
  \end{minipage}
  \end{figure}
  \begin{figure}
  \begin{minipage}[h]{0.24\linewidth}
  \caption[Post Pred]{(Top row): Posterior predictive values of the L-functions for the \matern model with probabilistic thinning for mild (left) and 
  moderate (right) neuropathy. (Bottom row): Corresponding predictive values for Strauss process fits.}
  \label{fig:diab_Lfunc}
  \end{minipage}
  \centering
  \begin{minipage}[h]{0.74\linewidth}
  \begin{minipage}[h]{0.49\linewidth}
  \centering
  \includegraphics[width=0.98\textwidth]{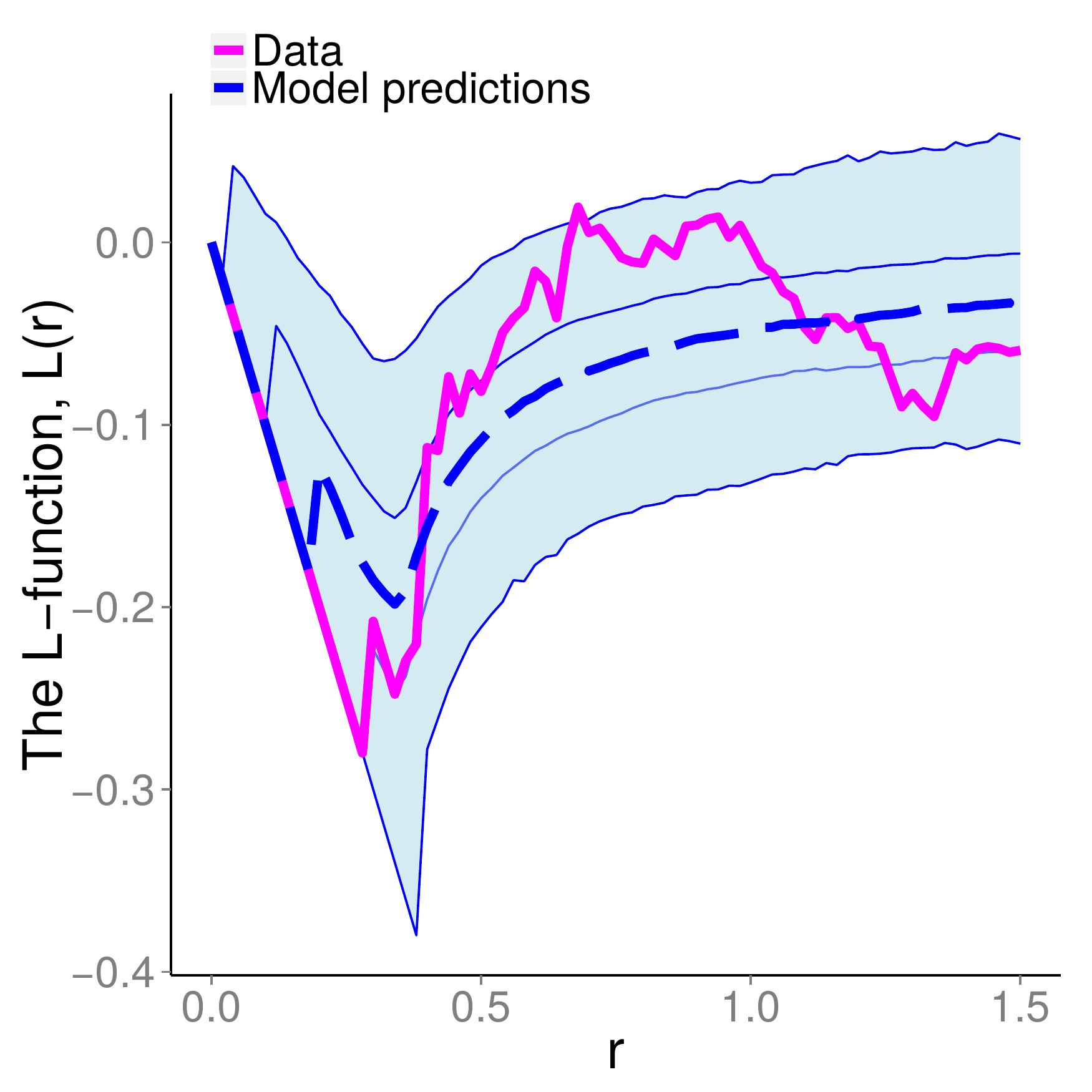}
  \end{minipage}
  \begin{minipage}[h]{0.49\linewidth}
  \centering
  \includegraphics[width=0.98\textwidth]{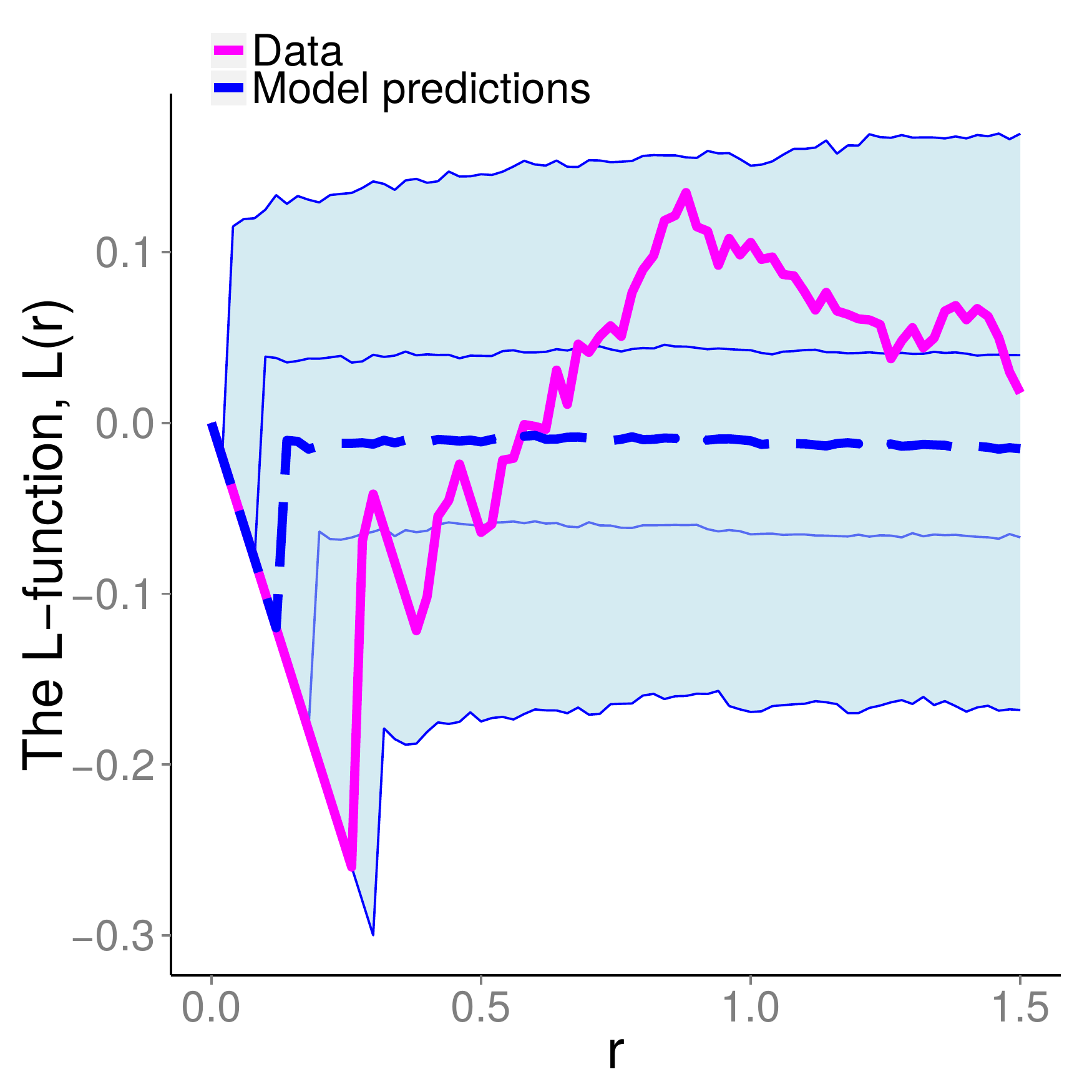}
  \end{minipage}
  \begin{minipage}[h]{0.49\linewidth}
  \centering
  \includegraphics[width=0.98\textwidth]{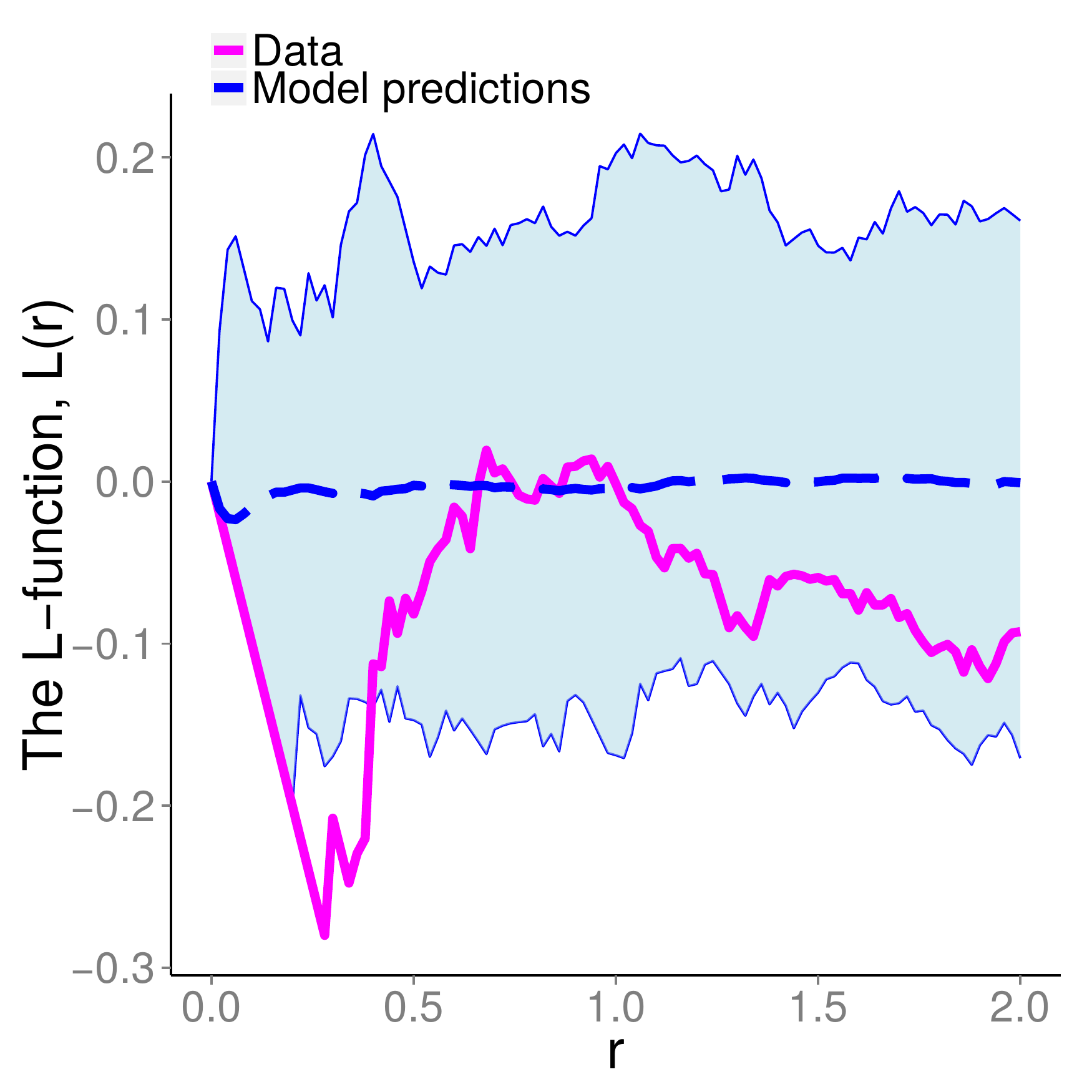}
  \end{minipage}
  \begin{minipage}[h]{0.49\linewidth}
  \centering
  \includegraphics[width=0.98\textwidth]{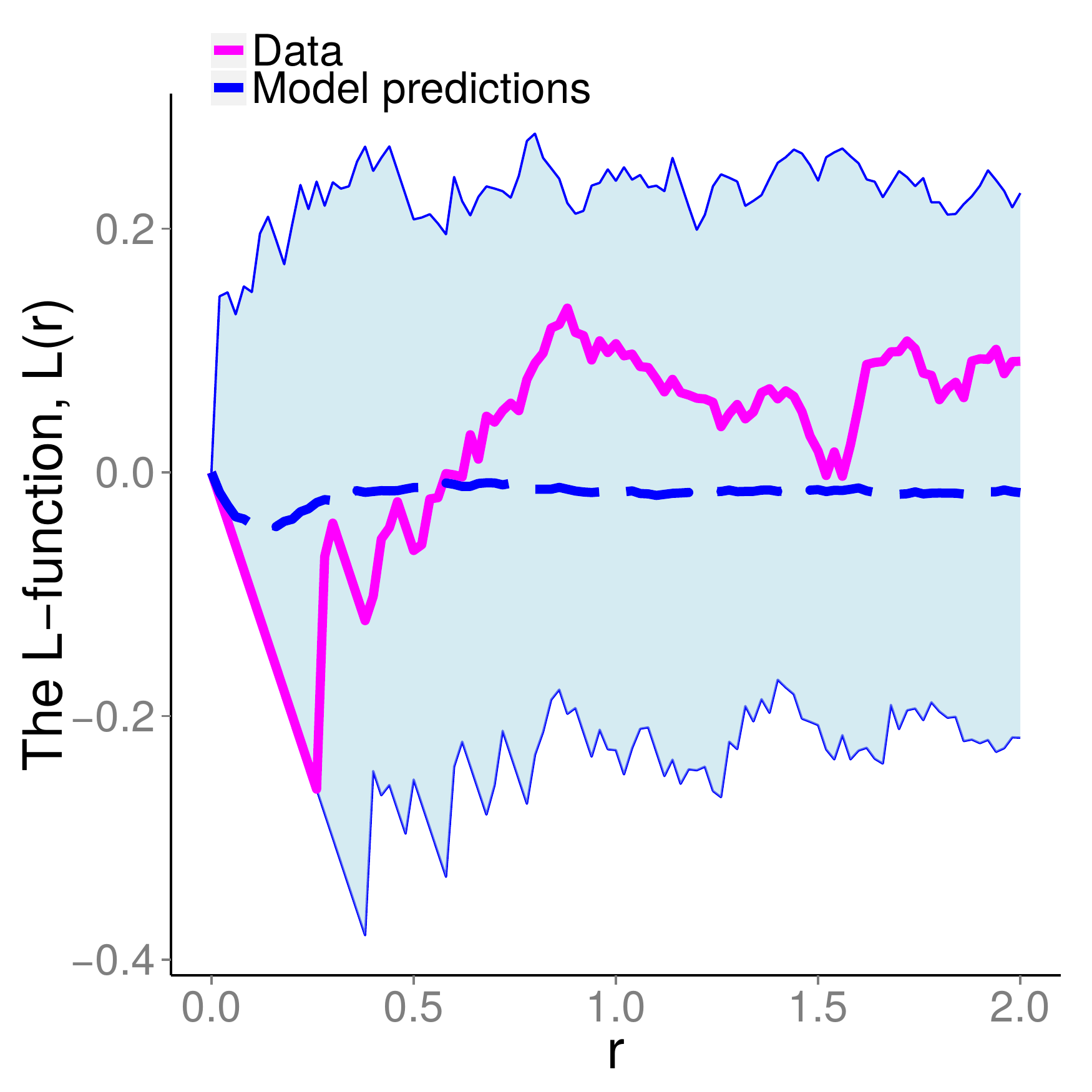}
  \end{minipage}
  \end{minipage}
  \end{figure}
For our next experiment, we consider a dataset of nerve fiber entry locations recorded from patients suffering from diabetes  \citep{WallSar11}.
Figure~\ref{fig:diab} plots these data for two patients at two stages of neuropathy, mild and moderate.
  As noted in \cite{WallSar11}, there is an increased clustering moving from mild to moderate, and this is confirmed by figure~\ref{fig:diab_Lfunc}, 
  which shows the empirical values for the L-function (solid magenta lines).
  Also included are posterior predictive values obtained from $10000$ posterior samples from the \matern model with probabilistic
  thinning. 
  The left plots for mild neuropathy show a repulsive effect (especially a dip in the L-function  around 
  the $0.5$ mark), and this is captured by the model. 
  By contrast, the moderate case exhibits repulsion and clustering at different scales, and the model settles with a Poisson process whose
  predictive intervals include such behavior.

To show the difference in the fits for the two conditions, at each MCMC iteration we calculate a statistic measuring repulsive influence: the thinning 
probability multiplied by the squared interaction radius. Figure~\ref{fig:rep_inf} plots this for the two cases, and we see that there is more
repulsion for the mild case. Another approach is to fix the thinning probability (to say $0.75$), and plot the posterior distributions over the
interaction radius. We get similar results, but have not reported them here. Distributions like these are important diagnostics towards
the automatic assessment of neuropathy.

  Figure~\ref{fig:diab_Lfunc} show the corresponding fits for a Strauss process. This is much
  worse, with the estimate essentially reducing to a Poisson process. 
  The failure of this model has partly to do with the fact that it is not as flexible as our \matern model
  with probabilistic thinning; another factor is the use of an MLE estimate as opposed to a fully Bayesian posterior.
  While it is possible to use a more complicated Gibbs-type process, inference (especially Bayesian inference) becomes
  correspondingly harder. Furthermore, as we demonstrate next, it is easy to incorporate nonstationarity into \matern
  processes; doing this for a Gibbs-type process is not at all straightforward. In the supplementary material, we include similar
  results for the J-function as well.

\section{Nonstationary \matern processes}  \label{sec:inhom_mat}
In many settings, it is useful to 
incorporate nonstationarity into repulsive models. 
For example, factors like soil fertility, rainfall and terrain can affect the density of trees at different locations. 
We will consider a dataset of locations of Greyhound bus stations: to allocate resources efficiently, these will be underdispersed.
At the same time, one expects larger intensities in urban areas than in more remote ones.
In such situations, it is important to account for nonstationarity while estimating repulsiveness

A simple approach is 
to allow the intensity of the primary Poisson $\lambda(s)$ to vary over~$\cS$. 
A flexible model of such a nonstationary intensity function is a transformed Gaussian process.
Let $\mathcal{K}(\cdot,\cdot)$ be the 
covariance kernel of a Gaussian process, $\hlambda$ a positive scale parameter, and 
$\sigma(x)=(1+\exp(-x))^{-1}$ the sigmoidal transformation. Then~$\lambda(\cdot)$ is a random function defined as
  \begin{align}
    \lambda(\cdot) = \hlambda \sigma(l(\cdot)), \quad
    l(\cdot) \sim \mathcal{GP}(0,\mathcal{K})\label{eq:sig_mod}
  \end{align}
  This model is closely related to the log-Gaussian Cox process, with a sigmoid link function instead of an exponential.
The sigmoid transformation serves two purposes: to ensure that the intensity $\lambda(s)$ is nonnegative, and
to provide a bound $\hlambda$ on the Poisson intensity. As shown in \cite{adams-murray-mackay-2009b}, such a bound makes drawing events from the 
primary process possible by a 
clever application of the thinning theorem. 
We introduced the thinning theorem in section~\ref{sec:inf_mat}; we state it formally below.
\begin{thrm}[Thinning theorem, \cite{Lewis1979}]
 Let $E$ be a sample from a Poisson process with intensity $\hlambda(s)$.
For some nonnegative function ${\lambda(s) \le \hlambda(s) \, \forall s \in \cS}$, assign each
point $e \in E$ to $F$ with probability $\frac{\lambda(e)}{\, \hlambda(e)}$.
 Then $F$ is a draw from a Poisson process with intensity $\lambda(s)$. \label{thrm:Thin}
\end{thrm}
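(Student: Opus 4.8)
The plan is to prove the thinning theorem by showing that the thinned process $F$ has the correct finite-dimensional distributions, which for a Poisson process amounts to verifying two things: (i) for any bounded region $A \subseteq \cS$, the count $|F \cap A|$ is Poisson with mean $\int_A \lambda(s)\,\mu(\dif s)$, and (ii) counts over disjoint regions are independent. Equivalently, and more cleanly, I would compute the probability generating functional (or Laplace functional) of $F$ and check it matches that of a Poisson process with intensity $\lambda$. Since the earlier material in the paper works with Janossy densities on the ordered-sequence space $(\Su, \Sigma^{\cup})$, an attractive route is to derive the density of $F$ directly from the density of $E$ given in Theorem~\ref{thrm:poiss_density} together with the independent marking/thinning construction.

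Concretely, I would first condition on a realization $E$ with $|E| = n$ points $e_1, \dots, e_n$. Each point is independently kept with probability $\lambda(e_j)/\hlambda(e_j)$; let the retained subset be $F$. The joint density of $E$ and the keep/discard labels factorizes: the density of $E$ under $\muu$ is $\exp(-\int_{\cS}\hlambda)\prod_{j=1}^{n}\hlambda(e_j)$, and multiplying by the label probabilities gives, for a configuration in which a subset $F$ of size $k$ is kept and $E \setminus F$ of size $n-k$ is discarded,
\begin{align}
\exp\!\left(-\int_{\cS}\hlambda(s)\,\mu(\dif s)\right)\prod_{f \in F}\lambda(f)\prod_{e \in E\setminus F}\bigl(\hlambda(e) - \lambda(e)\bigr).
\end{align}
To obtain the marginal density of $F$ at a fixed configuration of $k$ points, I integrate out the discarded points: I sum over the number $n-k$ of discarded points and integrate each over $\cS$ with respect to $\mu$, picking up a factor $\frac{1}{(n-k)!}\bigl(\int_{\cS}(\hlambda - \lambda)\,\mu(\dif s)\bigr)^{n-k}$ at multiplicity $n-k$ (the factorial accounting for the unordered labelling, consistent with the paper's bookkeeping on $\muu$). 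Summing the resulting series gives $\exp\!\left(\int_{\cS}(\hlambda - \lambda)\,\mu(\dif s)\right)$, which cancels against part of the leading exponential to leave exactly
\begin{align}
p(F) = \exp\!\left(-\int_{\cS}\lambda(s)\,\mu(\dif s)\right)\prod_{f \in F}\lambda(f),
\end{align}
which by Theorem~\ref{thrm:poiss_density} is the density of a Poisson process with intensity $\lambda$.

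The main obstacle, and the step requiring the most care, is the combinatorial/measure-theoretic bookkeeping when marginalizing out the discarded points: one must correctly match the symmetry factors arising from treating point processes as ordered sequences versus unordered sets, and justify interchanging the infinite sum with the integral (dominated convergence, using $\int_{\cS}\hlambda < \infty$ so all series converge absolutely). A cleaner alternative that sidesteps some of this is the generating-functional argument: for a test function $0 \le h \le 1$, compute $\mathbb{E}\bigl[\prod_{f \in F} h(f)\bigr]$ by first conditioning on $E$, using independence of the marks to get $\mathbb{E}\bigl[\prod_{e \in E}\bigl(1 - \tfrac{\lambda(e)}{\hlambda(e)}(1 - h(e))\bigr)\bigr]$, and then applying the known Poisson generating functional for $E$ to obtain $\exp\!\left(-\int_{\cS}\lambda(s)(1 - h(s))\,\mu(\dif s)\right)$ — the Poisson generating functional with intensity $\lambda$. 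Either way, uniqueness of a measure determined by its finite-dimensional distributions (respectively its generating functional) closes the argument; I would cite \cite{DalVer2008a} for that uniqueness and for the standard Poisson functional identity.
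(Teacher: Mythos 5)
Your proposal is correct, but note that the paper itself offers no proof of this statement: it is quoted as a known result and attributed to \cite{Lewis1979}, so there is no internal argument to compare against step by step. That said, your first route is exactly in the spirit of the paper's own machinery: writing the joint density of $E$ and the keep/discard labels with respect to $\muu$, marginalizing the discarded points, and resumming the exponential series is the same Janossy-density bookkeeping the appendix uses to prove Theorem~\ref{thrm:mat_dens} (there the $\frac{1}{(n-k)!}$ arises from integrating the ordered birth times $T^{\tG}$; in your case it arises because the increasing region of $\cS^{n-k}$ has $\mu^{n-k}$-measure $\frac{1}{(n-k)!}\bigl(\mu(\cS)\bigr)^{n-k}$-type symmetry, the integrand being symmetric in the discarded points). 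Your cancellation $\exp(-\int_{\cS}\hlambda\,\dif\mu)\exp\bigl(\int_{\cS}(\hlambda-\lambda)\,\dif\mu\bigr)=\exp(-\int_{\cS}\lambda\,\dif\mu)$ is right, and the finiteness assumption $\int_{\cS}\hlambda\,\dif\mu<\infty$ (already implicit in the paper's finite-point-process setting) justifies the interchange of sum and integral. The generating-functional alternative is also valid and arguably cleaner, since it avoids the ordered-versus-unordered bookkeeping entirely and extends verbatim to the non-finite case; its only external input is the Poisson probability generating functional and the uniqueness theorem, both available in \cite{DalVer2008a}. Either argument would serve as a self-contained proof where the paper simply cites the literature.
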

Now, note from equation \eqref{eq:sig_mod} that ${\hlambda \ge \lambda(s)}$. Following the thinning theorem, 
one can obtain a sample from the rate $\lambda(s)$ inhomogeneous Poisson process by thinning a random sample $E$ from a rate $\hlambda$ homogeneous
Poisson process. Importantly, the thinning theorem requires us to instantiate the random intensity $\lambda(s)$ only on the elements of $E$,
avoiding any need to evaluate integrals of the random function $\lambda(s)$. We then have the
following retrospective sampling scheme: sample a homogeneous Poisson process with intensity $\hlambda$, and instantiate the Gaussian process $l(\cdot)$ on this sequence.
Keeping each element $e$ with probability ${\sigma(l(e))}$, 
we have an \emph{exact} sample from the inhomogeneous primary process. Call this $F$, and call the thinned events $\tF$. 
We can then use any of the \matern thinning schemes outlined previously to further thin $F$, resulting in an inhomogeneous repulsive process $G$.
Observe that there are now two stages of thinning, the first is an application of the 
Poisson thinning theorem to obtain the inhomogeneous primary process $F$ from the homogeneous Poisson process $E$, and the second, the \matern 
thinning to obtain $G$ from $F$. Algorithm~\ref{alg:inhom_mat} outlines the generative process of an inhomogeneous generalized \matern type-III process.
\begin{algorithm}
{Algorithm to sample an inhomogeneous \matern process on $\cS$} \label{alg:inhom_mat}
\begin{tabular}{p{1.4cm}p{12.2cm}}
\textbf{Input:}  & A Gaussian process prior $\mathcal{GP}(0, \mathcal{K})$ on the space $\cS$, a constant $\hat{\lambda}$ and \\
                 & the thinning kernel parameters $\theta$. \\
\textbf{Output:} & A sample $G$ from the nonstationary \matern type-III process. \\
\hline 
\end{tabular}
\begin{algorithmic}[1]
\State Sample $E$ from a homogeneous Poisson process with intensity $\hat{\lambda}$.
\State Instantiate the Gaussian process $l(\cdot)$ on these points. Call vector $l_E$.
\State 
       Keep a point $e \in E$ with probability $\sigma(l(e))$, otherwise thin it. The surviving points form the primary process $F$.
\State Assign $F$ a set of random birth times $T^F$, independently and uniformly on $[0,1]$.
\State Proceed through elements of $F$ in order of birth. At each element, evaluate the shadow of the previous surviving elements,
       and keep it or thin it as appropriate.
\State The surviving points $G$ form the inhomogeneous \matern type-III point process.
\end{algorithmic}
\end{algorithm}

As in section~\ref{sec:inf_mat}, we place priors on $\hlambda$ as well as $\theta$. We also place hyperpriors on the hyperparameters of the GP covariance kernel. 

\subsection{Inference for the inhomogeneous \matern type-III process}


Proceeding as in section~\ref{sec:matern_pdf},
it follows that the events $\tG^+$ thinned by the repulsive kernel are conditionally distributed as an inhomogeneous Poisson process with 
intensity $\lambda(s) \mathscr{H}(s,t;G^+)$ (with the $\lambda$ from corollary~\ref{prop:mat_post} replaced by $\lambda(s)$). From the thinning theorem, simulating from such a process is a simple matter of thinning events of a homogeneous,
rate $\hlambda$ Poisson process, exactly as outlined in 
the previous section. Having reconstructed the inhomogeneous primary Poisson process, the update rules for the birth times $T_G$, and the thinning kernel parameters 
$\theta$ are identical to the homogeneous case. 

The only new idea 
involves updating the random function $\lambda(s)$ (more precisely, the latent GP, $l(s)$, and the scaling factor, $\hlambda$). 
To do this, we first instantiate $\tF$, the events of $E$ thinned in constructing the inhomogeneous primary process $F$.
For this step, \citet{adams-murray-mackay-2009b} constructed a Markov transition kernel, involving a set of birth-death moves that updated the number of thinned 
events, as
well as a sequence of moves that perturbed the locations of the thinned events. This kernel was set up to have as equilibrium distribution the required
posterior over the thinned events. 
Instead, similar in spirit to our idea of jointly simulating the thinned \matern events, it is possible to produce a conditionally independent sample of $\tF$.
Instead of a number of local moves that perturb the current setting of $\tF$ in the Markov chain, we can discard the old thinned
events, and jointly produce a new sample given the rest of the variables. 
The required joint distribution is given by the following corollary of Theorem~\ref{thrm:Thin}:
\begin{coro} \label{prop:thin_post} Let $F$ be a sample from a Poisson process with intensity $\lambda(s)$, produced
by thinning a sample $E$ from a Poisson process with intensity $\hlambda$. Call the thinned events $\tF$. Then given $F$, $\tF$ is a 
Poisson process with intensity $\hlambda - \lambda(s)$.
\end{coro}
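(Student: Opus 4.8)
The plan is to reuse exactly the template of the proof of Corollary~\ref{prop:mat_post}: form the joint Janossy density of the pair $(\tF, F)$, recognize it with Theorem~\ref{thrm:poiss_density}, and then apply Bayes' rule by dividing by the marginal density of $F$. Throughout I would condition on the realized intensity function $\lambda(\cdot)$ (random in the GP model, but part of the conditioning set), so that $0 \le \lambda(s) \le \hlambda$ is a fixed function on the bounded region $\cS$ with $\mu(\cS) < \infty$.

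\emph{First step: the joint density.} Since $E = \tF \cup F$ is the reordering of the concatenation of the two increasing sequences, and each point $e \in E$ is independently retained in $F$ with probability $\lambda(e)/\hlambda$ and sent to $\tF$ otherwise, the density of $(\tF,F)$ with respect to $\muu$ is the density of the homogeneous rate-$\hlambda$ Poisson process $E$ (given by Theorem~\ref{thrm:poiss_density}) times the product of the per-point labeling probabilities:
\begin{align}
  p(\tF, F) &= \exp(-\hlambda\,\mu(\cS))\,\hlambda^{|\tF| + |F|}
    \prod_{f \in F} \frac{\lambda(f)}{\hlambda}\prod_{\tilde{f} \in \tF} \frac{\hlambda - \lambda(\tilde{f})}{\hlambda} \\
  &= \exp(-\hlambda\,\mu(\cS)) \prod_{f \in F} \lambda(f)\prod_{\tilde{f} \in \tF} \bigl(\hlambda - \lambda(\tilde{f})\bigr),
\end{align}
where the powers of $\hlambda$ cancel because $|E| = |\tF| + |F|$.

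\emph{Second step: divide by the marginal.} By the thinning theorem (Theorem~\ref{thrm:Thin}), $F$ is a Poisson process with intensity $\lambda(s)$, so $p(F) = \exp\!\left(-\!\int_\cS \lambda(s)\,\mu(\dif s)\right) \prod_{f \in F}\lambda(f)$. Since $\hlambda\,\mu(\cS) = \int_\cS \hlambda\,\mu(\dif s)$, Bayes' rule gives
\begin{align}
  p(\tF \mid F) &= \frac{p(\tF,F)}{p(F)}
    = \exp\!\left(-\!\int_\cS \bigl(\hlambda - \lambda(s)\bigr)\,\mu(\dif s)\right)\prod_{\tilde{f} \in \tF}\bigl(\hlambda - \lambda(\tilde{f})\bigr),
\end{align}
and comparing with Theorem~\ref{thrm:poiss_density} this is precisely the density of a Poisson process on $\cS$ with intensity $\hlambda - \lambda(s)$, which is the claim.

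\emph{Third: the obstacle.} There is no genuine analytic difficulty here --- the statement is essentially the ``converse'' of the thinning theorem --- so the work is entirely bookkeeping, and the only points I would be careful about are: (i) that computing the joint as ``density of $E$ times labeling probabilities'' is legitimate, which is already licensed by the measure-theoretic setup of $(\Su, \Sigma^{\cup}, \muu)$ in Section~\ref{sec:matern_pdf}; (ii) that in the nonstationary model $\lambda(\cdot)$ is itself random, so the corollary is to be read conditionally on $l(\cdot)$ and $\hlambda$; and (iii) that finiteness of $\mu(\cS)$ is needed for the exponentials. A density-free alternative is to apply Theorem~\ref{thrm:Thin} directly with retention probability $1 - \lambda(e)/\hlambda$ to conclude that $\tF$ is marginally Poisson with intensity $\hlambda - \lambda(s)$, and then argue independence of the retained and discarded subsets given the split of $E$; but turning ``given $E$'' into ``given $F$'' still reduces to the Janossy-density identity above, so I would simply present the density calculation as the clean route.
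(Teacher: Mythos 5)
Your proof is correct, but it is not the route the paper actually takes: the paper's proof of Corollary~\ref{prop:thin_post} is the two-line ``intuitive'' argument --- by symmetry of the thinning construction, $\tF$ is marginally a Poisson process with intensity $\hlambda - \lambda(s)$, and since independent thinning splits $E$ into \emph{independent} processes $F$ and $\tF$, conditioning on $F$ changes nothing --- whereas you carry out in full the ``direct approach using the Poisson densities'' that the paper only mentions in passing. Concretely, you reuse the template of Corollary~\ref{prop:mat_post}: write the joint Janossy density of $(\tF,F)$ as the rate-$\hlambda$ density of $E$ from Theorem~\ref{thrm:poiss_density} times the per-point labeling probabilities $\lambda(e)/\hlambda$ and $1-\lambda(e)/\hlambda$, divide by the marginal $p(F)$ supplied by Theorem~\ref{thrm:Thin}, and identify the quotient as the density of a Poisson process with intensity $\hlambda-\lambda(s)$; the algebra is right, and your cautions about conditioning on the realized $\lambda(\cdot)$ and on $\mu(\cS)<\infty$ are appropriate. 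What the two routes buy is slightly different: the paper's argument is shorter and conceptual, but it silently invokes the independence (coloring/marking) property of Poisson thinning, which is strictly more than Theorem~\ref{thrm:Thin} as stated (that theorem only gives the marginal law of $F$); your calculation never assumes that independence --- it effectively \emph{derives} it, since the conditional density you obtain does not depend on $F$ --- at the cost of a page of bookkeeping in the $(\Su,\Sigma^{\cup},\muu)$ formalism. Your closing remark that the density-free alternative still ``reduces to the Janossy-density identity'' undersells the paper's argument: once independence of $F$ and $\tF$ is granted, no passage from ``given $E$'' to ``given $F$'' is needed at all, because conditioning on $F$ is then vacuous for the law of $\tF$.
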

\begin{proof}
A direct approach uses the Poisson densities defined in Theorem~\ref{thrm:poiss_density}. More intuitive is the following: by symmetry, the
thinning theorem implies that the thinned events $\tF$ are distributed as a Poisson process with intensity $\hlambda - \lambda(s)$. By the
independence property of the Poisson process, $F$ and $\tF$ are independent, and the result follows.
\end{proof}
Applying this result to simulate the thinned events $\tF$ is a straightforward application of the thinning theorem: 
sample from a homogeneous Poisson process with intensity $\hlambda$, conditionally instantiate the function $\lambda(\cdot)$ on these points (given its values on $F$), 
and keep element $\tilde{f}$ with probability $1 - \sigma(\tilde{f})$. 

Having reconstructed the sequence $E = F \cup \tF$, we can update the values of the GP instantiated on $E$. 
Recall that an element $e \in E$ is assigned to $F$ with probability~$\sigma(l(s))$, otherwise it is
assigned to $\tF$. Thus updating the GP reduces to updating the latent GP in a classification problem with a sigmoidal link function.
There are a variety of sampling algorithms to simulate such a latent Gaussian process, in our experiments we used the elliptical slice sampling from 
\cite{murray2010}. Finally, as the number of elements of $E$ is Poisson distributed with rate $\hlambda$, a Gamma prior results in a Gamma posterior as 
in section~\ref{sec:Poiss_int_inf}.
We describe our overall sampler in detail in algorithm~\ref{alg:mat_np_inf}.
\algblock[sampleG]{StartG}{EndG}
\algrenewtext{StartG}{\textbf{Sample the \matern thinned events $\tG^+$:}}
\algrenewtext{EndG}{}

\algblock[sampleF]{StartF}{EndF}
\algrenewtext{StartF}{\textbf{Sample the Poisson thinned events $\tF$:}}
\algrenewtext{EndF}{}

\algblock[sampleMisc]{StartMisc}{EndMisc}
\algrenewtext{StartMisc}{\textbf{Sample the \matern birth-times:}}
\algrenewtext{EndMisc}{}

\algblock[sampleParam]{StartParam}{EndParam}
\algrenewtext{StartParam}{\textbf{Sample the parameters $\hlambda$ and $\theta$:}}
\algrenewtext{EndParam}{}

\begin{algorithm}
{MCMC update for inhomogeneous \matern type-III process on $\cS$}
\begin{tabular}{p{1.4cm}p{12.2cm}}
\textbf{Input:}  & \matern events with birth times $G^+ \equiv (G,T^G)$, \\
    & Thinned primary events $\tG^+ \equiv (\tG,T^{\tG})$ and thinned Poisson events $\tilde{F}$ \\
    & A GP realization $l_E$ on $E \equiv G \cup \tG \cup \tF$\\
    & Parameters $\hlambda$ and $\theta$  \\
\textbf{Output:} & New realizations $T^G_{new}$, $\tG_{new}^+$, $\tF_{new}$, \\
    &  A new instantiation of the GP on $G \cup \tG_{new} \cup \tF_{new}$. \\
    &  New values of $\hlambda$ and $\theta$ \\
\hline 
\end{tabular}
\begin{algorithmic}[1]
\StartG
\State  \begin{minipage}{\textwidth} Discard the old \matern- thinned event locations $\tG^+$.\end{minipage}
\State Sample events $A^+ \equiv (A, T^A)$ from a rate $\hlambda$ Poisson process on $\cS \times \cT$. 
\State Sample $l_A | l_E$ (conditionally from a multivariate normal).
\State Keep a point $a \in A$ with probability $\sigma(l(a)) \mathscr{H}(a;G^+)$, otherwise thin it. 
\State The surviving points form the new \matern thinned events $\tG^+_{new}$.
\State Discard GP evaluations on old \matern events, and add the new ones to $l_E$.
\EndG
\StartF
\State Define $E_{new} \equiv G \cup \tG_{new} \cup \tF$
\State  \begin{minipage}{\textwidth} Discard the old thinned primary Poisson events $\tF^+$.\end{minipage}
\State Sample events $B^+ \equiv (B, T^B)$ from a rate $\hlambda$ Poisson process on $\cS \times \cT$. 
\State Sample $l_B | l_{E_{new}}$ (conditionally from a multivariate normal).
\State  \begin{minipage}{\textwidth} Keep a point $b \in B$ with probability $1 - \sigma(l(b))$, otherwise thin it. 
       The surviving points form the new Poisson thinned events $\tF_{new}$.\end{minipage}
\State Define $E_{new} \equiv G \cup \tG_{new} \cup \tF_{new}$
\EndF
\StartMisc
\State  \begin{minipage}{\textwidth} For each \matern observation $g$, resample its birth time $T^G_g$ conditioned on all other variables.\end{minipage}
\EndMisc
\State \label{step:gp_inf}\!\textbf{Sample the GP values $l_{E_{new}}$:} 
\State \hspace{.2in} We used elliptical slice sampling \citep{murray2010}.
\State
\StartParam
\State  \begin{minipage}{\textwidth} Sample $\hlambda$ and $\theta$ given the remaining variables.\end{minipage}
\EndParam
\end{algorithmic}
\label{alg:mat_np_inf}
\end{algorithm}

\subsection{Experiments}

We consider a dataset of the locations of $79$ Greyhound bus stations in three states in the southeast United States (North Carolina, South Carolina and 
Georgia)\footnote{Obtained from \texttt{http://www.poi-factory.com/}.}. Figure~\ref{fig:grey_nc_in}(left) plots these locations along with the inferred intensity function
from our nonstationary \matern process. We modeled the intensity function $\lambda(\cdot)$ using a sigmoidally-transformed Gaussian process 
with zero-mean and a squared-exponential kernel. We placed log-normal hyperpriors on the 
 GP hyperparameters ((shape, scale) as $(1,1))$, and a Gamma$(10,10)$ prior on the scale parameter $\hlambda$ (we allowed larger variance since
 there are two levels of thinning).
 To ease comparison across two cases, we fixed the thinning probability to $0.75$, and placed a Gamma$(1,1)$ prior on the interaction radius $R$ (which
 we expect to be less than one degree of latitude/longitude).
The contours in figure~\ref{fig:grey_nc_in}(left) correspond to the posterior mean of $\lambda(\cdot)$, obtained from $10000$ MCMC samples 
following algorithm~\ref{alg:mat_np_inf}.
 We used elliptical slice sampling \citep{murray2010} to resample the GP values 
(step~\ref{step:gp_inf} in algorithm~\ref{alg:mat_np_inf}). and the GP hyperparameters were resampled by slice-sampling \citep{murray2010b}. 

We see a strong nonstationarity, driven mainly by the Atlantic Ocean to the lower right, and the absence of recordings to the top left (in Tennessee).
Not surprisingly, the intensity function takes large values over the three states of interest. The right subplot shows the same analysis, now
restricted to North Carolina (with observations in the other states discarded). Here we see more clearly a fine structure in the intensity
function with two peaks, one in the research triangle area (Raleigh-Durham-Chapel Hill) to the west, and the other corresponding to the Charlotte
metropolitan area. 
  \begin{figure}
  \centering
  \begin{minipage}[h]{0.45\linewidth}
  \centering
    \includegraphics[width=0.99\textwidth, angle=0]{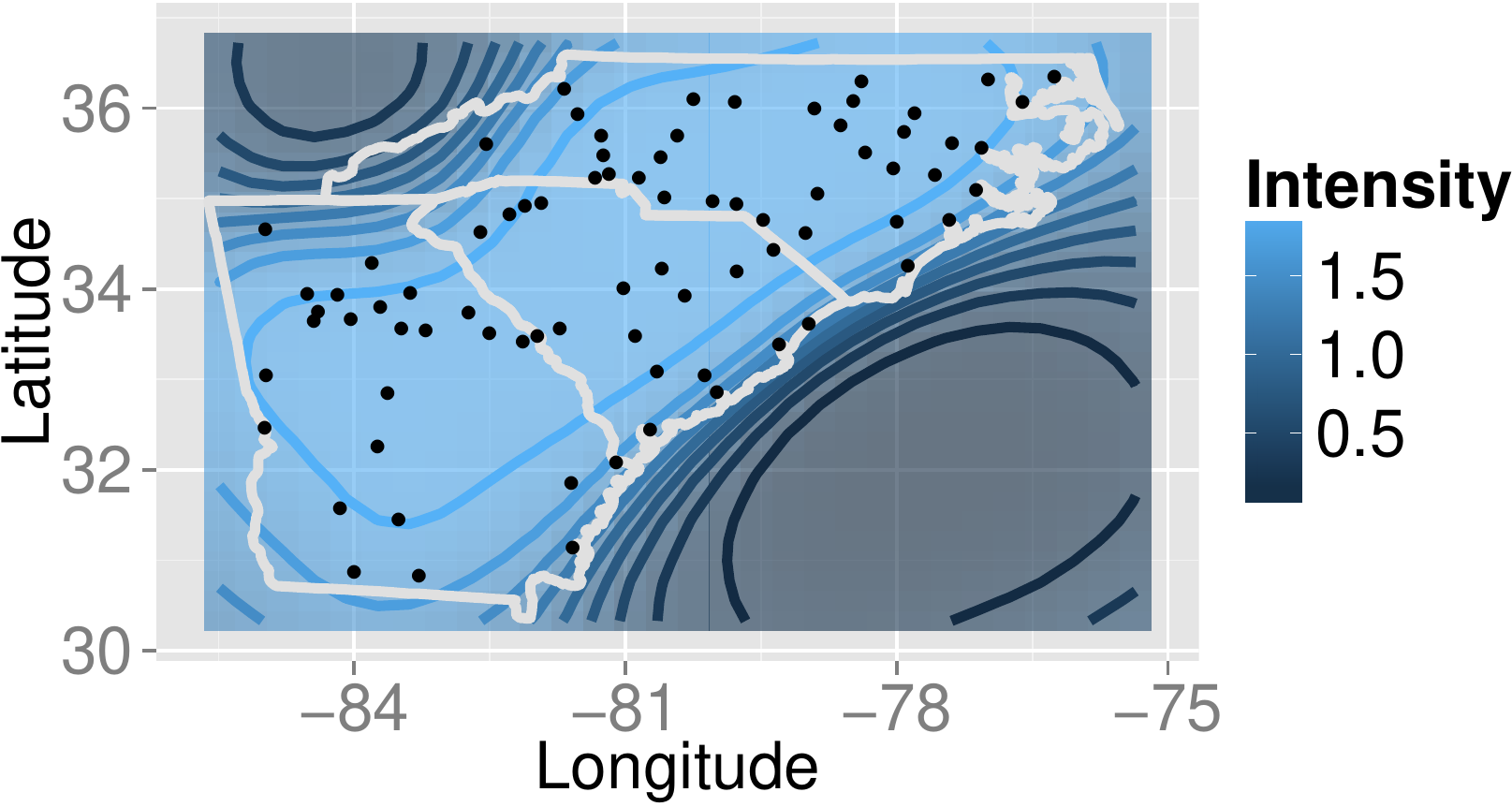}
  \end{minipage}
  \begin{minipage}[h]{0.48\linewidth}
  \centering
    \includegraphics[width=0.99\textwidth, angle=0]{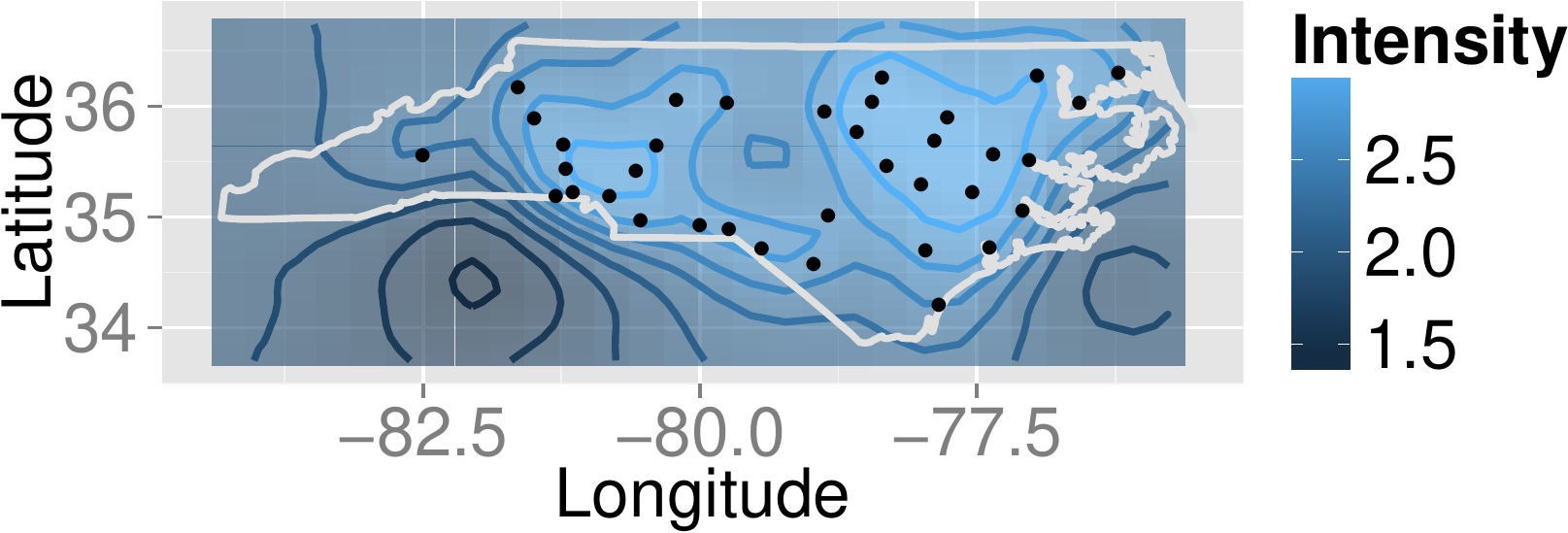}
    \vspace{-.3in}
    \caption{Posterior mean of the intensity for the greyhound dataset: (left) all three states (right) North Carolina}
  \label{fig:grey_nc_in}
  \end{minipage}
  \end{figure}

  \begin{figure}
  \centering
  \begin{minipage}[h]{0.23\linewidth}
  \caption{Posterior over the thinning radius for (left) all three states, and (right) North Carolina}
  \label{fig:grey_nc_in_rad}
  \end{minipage}
  \begin{minipage}[h]{0.28\linewidth}
    \centering
    \includegraphics[width=0.98\textwidth, angle=0]{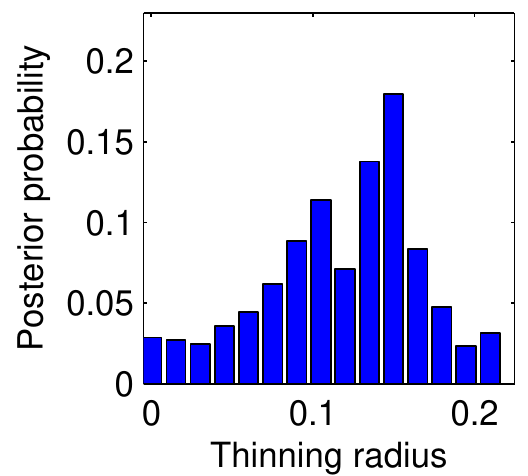}
  \end{minipage}
  \begin{minipage}[h]{0.28\linewidth}
  \centering
    \includegraphics[width=0.98\textwidth, angle=0]{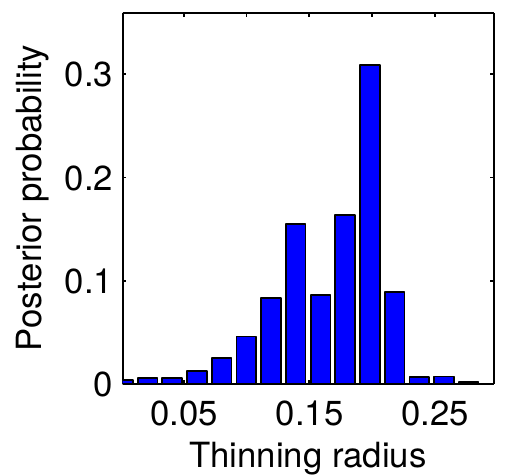}
  \end{minipage}
  \end{figure}

  \begin{figure}
  \centering
  \begin{minipage}[h]{0.28\linewidth}
  \caption{Inhomogeneous L-function for the Greyhound dataset: (left) posterior predictive values for nonstationary \matern with probabilistic thinning, and (right)
  fit of an inhomogeneous Poisson process} 
  \label{fig:grey_ncscga_lfunc}
  \end{minipage}
  \begin{minipage}[h]{0.32\linewidth}
    \centering
    \includegraphics[width=0.98\textwidth, angle=0]{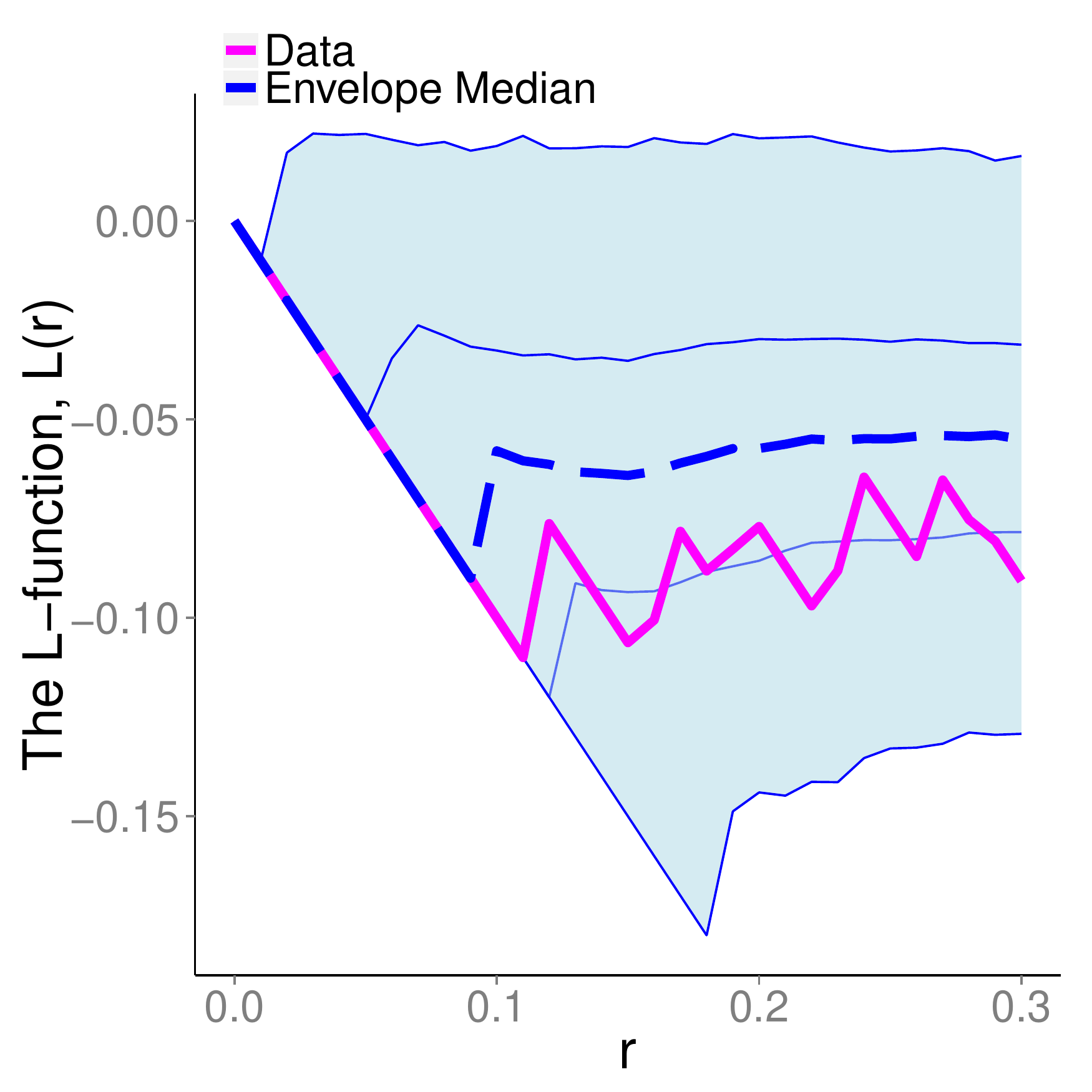}
  \end{minipage}
  \begin{minipage}[h]{0.32\linewidth}
  \centering
    \includegraphics[width=0.98\textwidth, angle=0]{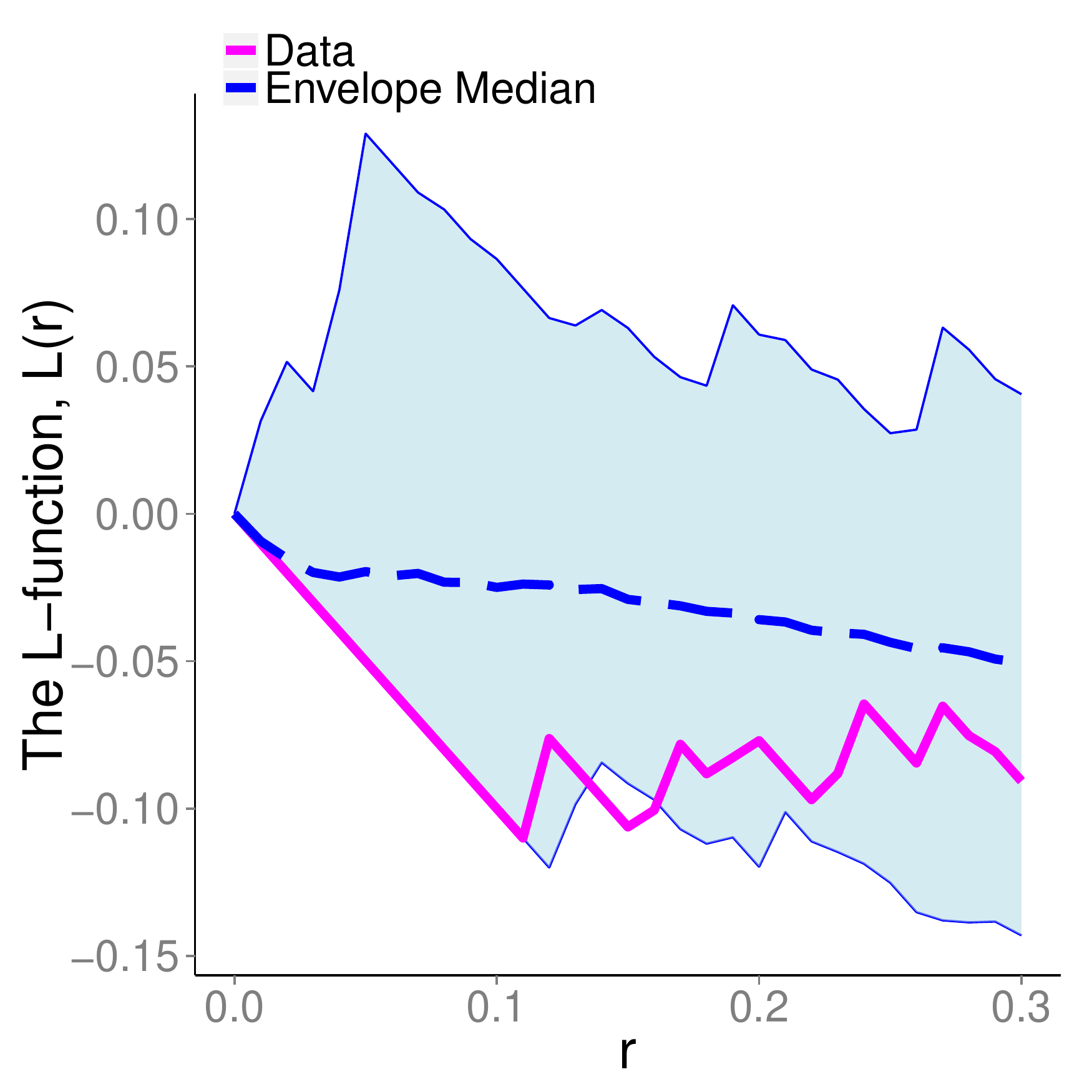}
  \end{minipage}
  \end{figure}
Figure~\ref{fig:grey_nc_in_rad} plots the posterior distributions over the thinning radius $R$ for both cases (recall that
since we used a fixed thinning probability of $0.75$, $R$ is a direct measure of the repulsive effect). 
In both cases, we see this distribution is peaked around $0.15$ to $0.2$ degrees, with the North Carolina dataset having a slightly clearer
repulsive effect. 
For the grouped analysis, we required all three states to share the same unknown $R$ while having their own nonstationarity: more generally, one can 
use a hierarchical model over $R$ to shrink or cluster the radii across different states. The resulting posterior over $R$ can help understand 
variations in pricing, reliability and delays, as well as assist towards future developmental work~\citep{Sahin2007}.

Figure~\ref{fig:grey_ncscga_lfunc} compares fits for a nonstationary \matern process and a Poisson process, using the inhomogeneous
L-function \citep{Baddely2000}. 
We see that the Poisson process (to the right) fails
to capture the empirical values over distances up to around $0.2$ degrees. This agrees with  figure~\ref{fig:grey_nc_in_rad} that repulsion occurs over 
these distances. The \matern process (to the left) provides a much better fit. 
We also include plots for the the nonstationary J-function in the supplementary material. This did not indicate much deviation from Poisson, 
though as \cite{Baddely2000} point out, this does not imply that the point process is Poisson. In any event, both models did near identical jobs 
capturing the J-function.

\section{Discussion}
We described a Bayesian framework for modeling repulsive interactions between events of a point process based on the \matern type-III point process. Such a framework allows flexible and intuitive repulsive effects between events, with parameters that are interpretable and realistic.
We developed an efficient MCMC sampling algorithm for posterior inference in these models,
and applied our ideas to a dataset of tree locations, and two datasets each of nerve-fiber locations and bus-station locations.

There are a number of interesting directions worth following.
While we only considered events in a $2$-dimensional space,
it is easy to generalize to higher dimensions to model, say, the distribution of galaxies in space or features in some 
feature space.
As with all repulsive point processes, very high densities lead to inefficiency (in our case because of very high primary process intensities).
It is important to better understand the theoretical and computational limitations of our model in this regime.
While we assumed the \matern events $G$ were observed perfectly, there is often noise into this observation process. In this case, given
the observed point process $G_{obs}$, we have to instantiate the latent \matern process $G$. 
Simulating the locations of the events in $G$ would require incremental updates, and if we allow for missing or extra events, we would need a 
birth-death sampler as well. A direction for future study is to see how these steps can be performed efficiently.
Having instantiated $G$, all other variables can be simulated as outlined in this paper. 
A related question concerns whether our ideas can be extended to develop efficient samplers for \matern type I and II processes as well.

Our models assumed a homogeneity in the repulsive properties of the \matern events. An interesting extension is to allow, say, the interaction radius or 
the thinning probability to vary spatially (rather than the Poisson intensity $\lambda(\cdot)$). 
Similarly, one might assume a clustering of these repulsive parameters; 
this is useful in situations where the \matern observations represent cells of different kinds. 
Finally, it is of interest to apply our ideas to hierarchical models that don't necessarily represent point pattern data, for instance to encourage diversity between cluster parameters in a mixture model.

\section{Acknowledgements}
This work was partially supported by the DARPA MSEE project and by NSF IIS-1421780. We thank Aila S{\"a}rkk{\"a} for providing us with the nerve fiber data. The Greyhound dataset
was obtained from \texttt{http://www.poi-factory.com/}.
\appendix
\section{Appendix}

\setcounter{defn}{1}

\begin{thrm}
Let $G^+ = (G, T^G)$ be a sample from a generalized \matern type-III process, augmented with the birth 
times. Let $\lambda$ be its intensity, and $\mathscr{H}_{\theta}(s,t;G^+)$ be its shadow following the appropriate thinning scheme. 
Then, its density w.r.t.\  ${\mu}^{\cup}$ is 
\begin{align}
  p(G^+ \given \lambda, \theta) &=\exp\left(-\lambda \int_{\cS \times \cT}\left( 1 - \mathscr{H}_{\theta}(s,t;G^+)\right)\mu(\dif s \,\dif t)\right)
              \lambda^{|G^+|}  \nonumber \\
              & \quad \times  \prod_{g^+ \in G^+} \left( 1 - \mathscr{H}_{\theta}(g^+;G^+) \right).
\label{eq:mat_marg_prob_app}
\end{align}

\end{thrm}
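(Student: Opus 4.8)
The plan is to derive~\eqref{eq:mat_marg_prob_app} by first writing down the joint density of the thinned events~$\tG^+$ together with the surviving events~$G^+$, and then integrating~$\tG^+$ out against the measure~$\muu$ on~$\Su$. Since the augmented primary process~$F^+$ is a Poisson process with constant intensity~$\lambda$ on~$\cS\times\cT$, Theorem~\ref{thrm:poiss_density} gives that $F^+$ has density $\exp(-\lambda\,\mu(\cS\times\cT))\,\lambda^{|F^+|}$ with respect to~$\muu$. Now $F^+ = \tG^+\cup G^+$ (concatenation followed by reordering), and merging two increasing sequences into one is unique and measure preserving, so no combinatorial factor appears: the density of the pair~$(\tG^+,G^+)$ with respect to $\muu(\dif\tG^+)\,\muu(\dif G^+)$ is the primary density multiplied by the conditional probability that the thinning mechanism labels every element of~$\tG^+$ ``thinned'' and every element of~$G^+$ ``surviving''.

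The heart of the argument is to evaluate this label probability and to show it factorises. I would condition on the locations and birth marks of~$F^+$ and process its events in order of birth time, so that by the chain rule the label probability is a product over events of the conditional probability of each event's label given the labels of all earlier events. For an event~$x$ labelled ``surviving'' this conditional probability is $1-\mathscr{H}_\theta(x;\,\cdot\,)$ evaluated against the current set of earlier surviving events, and for ``thinned'' it is $\mathscr{H}_\theta(x;\,\cdot\,)$ against that same set. The key observation is that, because of the factor $I(t>t^*)$ in the definition of~$\mathscr{H}_\theta$, adjoining surviving events with later birth times does not change the shadow at~$x$; hence the set of earlier survivors may be replaced by all of~$G^+$, and the chain-rule product regroups into $\prod_{g^+\in G^+}\bigl(1-\mathscr{H}_\theta(g^+;G^+)\bigr)\prod_{(\tilde s,\tilde t)\in\tG^+}\mathscr{H}_\theta(\tilde s,\tilde t;G^+)$. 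Multiplying by the primary density gives
\begin{align}
  p(\tG^+,G^+) &= \exp\!\bigl(-\lambda\,\mu(\cS\times\cT)\bigr)\,\lambda^{|\tG^+|+|G^+|}\prod_{(\tilde s,\tilde t)\in\tG^+}\mathscr{H}_\theta(\tilde s,\tilde t;G^+)\prod_{g^+\in G^+}\bigl(1-\mathscr{H}_\theta(g^+;G^+)\bigr). \label{eq:mat_joint_app}
\end{align}
For deterministic thinning these conditional probabilities are $\{0,1\}$-valued and the formula degenerates to an indicator; configurations violating the hardcore (or consistency) constraint get density zero because one of the factors $1-\mathscr{H}_\theta$ or~$\mathscr{H}_\theta$ vanishes, so~\eqref{eq:mat_joint_app} in fact covers every thinning scheme of Section~\ref{sec:gen_mat}, with~$\mathscr{H}_\theta$ instantiated accordingly (the softcore variant is handled identically after adjoining the interaction radii to the mark space).

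It then remains to marginalise, $p(G^+\given\lambda,\theta)=\int_{\Su}p(\tG^+,G^+)\,\muu(\dif\tG^+)$. Pulling the factors that do not depend on~$\tG^+$ outside the integral leaves $\int_{\Su}\lambda^{|\tG^+|}\prod_{(\tilde s,\tilde t)\in\tG^+}\mathscr{H}_\theta(\tilde s,\tilde t;G^+)\,\muu(\dif\tG^+)$; since this integrand is a product of functions each depending on a single thinned event, it is symmetric, so by the definition~\eqref{eq:base_measure1} of~$\muu$ (integrating a symmetric integrand over the increasing~$n$-sequences contributes a factor~$1/n!$) the integral equals
\begin{align}
  \sum_{n=0}^{\infty}\frac{1}{n!}\left(\lambda\int_{\cS\times\cT}\mathscr{H}_\theta(s,t;G^+)\,\mu(\dif s\,\dif t)\right)^{\!n} &= \exp\!\left(\lambda\int_{\cS\times\cT}\mathscr{H}_\theta(s,t;G^+)\,\mu(\dif s\,\dif t)\right). \nonumber
\end{align}
Multiplying by $\exp(-\lambda\,\mu(\cS\times\cT))$ and writing $\mu(\cS\times\cT)=\int_{\cS\times\cT}\mu(\dif s\,\dif t)$ merges the two exponentials into $\exp\bigl(-\lambda\int_{\cS\times\cT}(1-\mathscr{H}_\theta(s,t;G^+))\,\mu(\dif s\,\dif t)\bigr)$, which is exactly~\eqref{eq:mat_marg_prob_app}.

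The main obstacle will be~\eqref{eq:mat_joint_app}: one has to argue carefully that although the thinning is a sequential, path-dependent procedure, the probability of the resulting labelling collapses to the symmetric product shown above. This rests on the ``one-directional'' nature of the dependence --- thinned events never cast shadows, and a surviving event affects only strictly later events through the factor $I(t>t^*)$ --- on dispatching the measure-zero event of tied birth times by a density-zero argument, and on the fact that merging increasing sequences carries no combinatorial factor, which is precisely the reason for working with increasing sequences rather than with sets. Everything downstream of~\eqref{eq:mat_joint_app} is the routine summation of a Poisson-type series against~$\muu$.
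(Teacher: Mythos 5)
Your proposal is correct and follows essentially the same route as the paper's proof: write the joint of $(\tG^+,G^+)$ as the Poisson density of the primary process times the labelling probability (factorised using the one-directional $I(t>t^*)$ dependence, the paper's ``no circularity'' remark), then marginalise the thinned events to obtain the exponential of $\lambda\int\mathscr{H}_\theta\,\mu(\dif s\,\dif t)$. The only cosmetic difference is that the paper integrates out the thinned locations and then the ordered birth times (producing the $(n-k)!$ explicitly) before summing, whereas you marginalise $\tG^+$ in one step via the symmetry of the integrand over increasing sequences; the computations are identical in substance.
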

\begin{proof}
 As described in Section~\ref{sec:matern_pdf},  $G^+= (G, T^G)$ is a sequence in the union space $(\cS\times\cT)^{\cup}$. Its elements are ordered by the last dimension, so that $T^G$ is an
  increasing sequence. Let $|G^+|$, the length of $G^+$, be $k$. $G^+$ is obtained by thinning $F^+=(F,T^F)$, a sample from a homogeneous Poisson 
process with intensity $\lambda$.
Let the size of $F^+$ be $n \ge k$, and call the thinned points $\tG^+$. 
From Theorem~\ref{thrm:poiss_density}, the density of $F^+$ w.r.t.\ the measure ${\mu}^{\cup}$ is 
\begin{align}
 p(F^+) &= \exp\left(-\lambda \mu(\cS \times \cT)\right) \lambda^n. \label{eq:mat_prior_prob_app}
\end{align}
 Recall the definition of $\mathscr{H}_{\theta}(s,t; G^+)$, the shadow of $G^+$
(with $K_{\theta}$ the thinning kernel):
\begin{align}
  \mathscr{H}_{\theta}(s,t;G^+) &= 1 - \prod_{(s^*,t^*) \in G^+} (1 - I(t > t^*) K_{\theta}(s^*, s)) 
\end{align}
 We traverse sequentially through $F^+$, assigning element $i$ to the \matern process $G^+$ or the thinned set $\tG^+$ with probability determined
 by the shadow $\mathscr{H}_{\theta}(\cdot;G^+)$, and 
\begin{align}
  p(G^+, \tG^+ | F^+) = \prod_{(s,t) \in \tG^+} \mathscr{H}_{\theta}(s,t;G^+)  \prod_{(s,t) \in G^+} (1- \mathscr{H}_{\theta}(s,t;G^+)).   \label{eq:mat_marg_app}
\end{align}
In our notation above, the shadow $\mathscr{H}_{\theta}$ thins or keeps points to form $G^+=(G,T^G)$ and $\tG^+=(\tG, T^{\tG})$, but $\mathscr{H}_{\theta}$ also depends on 
$G^+$. There is no circularity however, since the shadow of a later point cannot affect an earlier point. The joint probability is 
\begin{align}
   p(G^+,\tG^+,&F^+) = p(G^+, F^+) = p(G^+, \tG^+) = \nonumber \\  
   \exp(-\lambda &\mu(\cS \times \cT)) \lambda^{n} 
              \prod_{(s,t) \in \tG^+} \mathscr{H}_{\theta}(s,t;G^+)  \prod_{(s,t) \in G^+} (1- \mathscr{H}_{\theta}(s,t;G^+)).   \label{eq:mat_joint_app}
            \intertext{Integrating out the $\cS$-locations of $n-k$ thinned elements } 
   p\left(G^+, T^{\tG} \right)  &=  \exp(-\lambda \mu(\cS \times \cT))
             \lambda^{k}  \prod_{(s,t) \in G^+} (1- \mathscr{H}_{\theta}(s,t;G^+)) \nonumber \\
             & \quad \prod_{t^{\tilde{g}} \in T^{\tG}} {\left( \lambda \int_{\cS } \mathscr{H}_{\theta}(s,t^{\tilde{g}};G^+)\mu(\dif s) \right)}. \label{eq:mat_l} 
\end{align}
{
We now integrate out the values of $T^{\tG}$, noting it is an ordered sequence of $(n-k)$ elements in $[0,1]$. We are left with 
$p(G^+, |T^{\tilde{G}}|=n-k )$, the joint probability of a sequence $G^+$ and that there are $n-k$ thinned events:
\begin{align}
  p\left(G^+, |T^{\tilde{G}}|=n-k \right)  &=  \exp(-\lambda \mu(\cS \times \cT))
             \lambda^{k}  \prod_{(s,t) \in G^+} (1- \mathscr{H}_{\theta}(s,t;G^+)) \nonumber \\
          & \quad \frac{\left( \lambda \int_{\cS \times \cT} \mathscr{H}_{\theta}(s,t;G^+)\mu(\dif s\ \dif t) \right)^{n-k}}{(n-k)!}.  
\end{align}
Finally, summing over values of $|T^{\tG}|$,}
\begin{align}
 p\left(G^+ \right)  
           &=   \exp\left(-\lambda  \int_{\cS \times \cT} \left(1 -  \mathscr{H}_{\theta}(s,t;G^+) \right) \mu(\dif s\ \dif t)  \right)
                 \prod_{(s,t) \in G^+} \lambda (1- \mathscr{H_{\theta}}(s,t;G^+)).
\end{align}


This is what we set out to prove. \hfill ${}_\blacksquare$
\end{proof}
\bibliographystyle{apalike}
\bibliography{refvr}

\end{document}